\setlist[itemize]{leftmargin=*}
\newcommand{\eps}{\varepsilon}
\DeclareMathOperator*{\maximize}{maximize}
\DeclareMathOperator*{\subjectto}{subject\ to}
\DeclareMathAlphabet\mathbfcal{OMS}{cmsy}{b}{n}
\newtheorem{theorem}{Theorem}
\newtheorem{mydef}{Definition}
\newtheorem{mypro}{Property}
\newtheorem{myrem}{Remark}
\newtheorem{asmp}{Assumption}
\newtheorem{myprs}{Proposition}
\newcommand{\mat}[1]{\boldsymbol{#1}}
\newcommand{\bmat}[1]{\begin{bmatrix} #1 \end{bmatrix}}
\providecommand{\eye}{\mat{I}}
\providecommand{\mA}{\ensuremath{\mat{A}}}
\providecommand{\mG}{\ensuremath{\mat{G}}}
\newcommand{\m}{\boldsymbol}
\newcommand{\mc}[1]{\mathcal{#1}}
\newcommand{\mbb}[1]{\mathbb{#1}}
\newcommand{\mr}[1]{\mathrm{#1}}
\DeclarePairedDelimiter\abs{\lvert}{\rvert}%
\DeclarePairedDelimiter\norm{\lVert}{\rVert}%
\let\oldabs\abs
\def\abs{\@ifstar{\oldabs}{\oldabs*}}
\let\oldnorm\norm
\def\norm{\@ifstar{\oldnorm}{\oldnorm*}}
\NewDocumentCommand{\evaluat}{sO{\big}mm}{%
	\IfBooleanTF{#1}
	{\mleft. #3 \mright|_{#4}}
	{#3#2|_{#4}}%
}
\tikzset{
	tangent/.style={decoration={
			markings,mark=at position #1 with {
				\coordinate (ta) at (0,0);
				\coordinate (tb) at (0.1,0);
			}
		},postaction=decorate},
	tangent/.default=0.5
}
\newcommand{\parstart}[1]{\noindent \textbf{#1.}\;}
\newcommand{\nocol}[4]{\left\{ {#1}_{#2}\right\}_{{#2}={#3}}^{#4}}
\newcommand{\Rn}[1]{\mathbb{R}^{#1}}
\newcommand{\titlebf}[1]{\title{\Large \vspace{0.7cm} \LARGE \centering {\textsc{\textbf{#1}}}}}
\author{Mohamad H. Kazma, \textit{Graduate Student Member, IEEE} and Ahmad F. Tah$\text{a}^{\diamond}$, \textit{Member, IEEE} \vspace{-0.2cm}
	\thanks{ 
		$^\diamond$Corresponding author. This work is supported by National Science Foundation under Grants 2152450 and 2151571. The authors are with the Civil $\&$ Environmental Engineering and Electrical $\&$ Computer Engineering Departments, Vanderbilt University, 2201 West End Ave, Nashville, Tennessee 37235. Emails: mohamad.h.kazma@vanderbilt.edu, ahmad.taha@vanderbilt.edu.}
}
\begin{document}
\newdimen\origiwspc%
\newdimen\origiwstr%
\origiwspc=\fontdimen2\font
\origiwstr=\fontdimen3\font

\maketitle
\markboth{IEEE Transactions on Power Systems, In Press, February 2025}{}

\begin{abstract}
The rapid increase in the integration of intermittent and stochastic renewable energy resources (RER) introduces challenging issues related to power system stability. Interestingly, identifying grid nodes that can best \textit{support} stochastic loads from RER, has gained recent interest. Methods based on Lyapunov stability are commonly exploited to assess the stability of power networks. These strategies approach quantifying system stability while considering: \textit{(i)} simplified reduced order power system models that do not model power flow constraints, or \textit{(ii)} data-driven methods that are prone to measurement noise and hence can inaccurately depict stochastic loads as system instability. In this paper, while considering a nonlinear differential algebraic equation (NL-DAE) model, we introduce a new method for assessing the impact of uncertain renewable power injections on the stability of power system nodes/buses. The identification of stable nodes \textit{informs} the operator/utility on how renewables injections affect the stability of the grid. The proposed method is based on optimizing metrics equivalent to the Lyapunov spectrum of exponents; its underlying properties result in a computationally efficient and scalable stable node identification algorithm for renewable energy resources allocation. The developed framework is studied on various standard power networks.
\end{abstract}
\vspace{-0.2cm}
\begin{IEEEkeywords}
		Power system modeling, nonlinear differential algebraic models, Lyapunov exponents, nonlinear stability 
\end{IEEEkeywords}
\markboth{IEEE Transactions on Power Systems, In Press, February 2025}{a}
\section{Introduction and paper contributions}\label{sec:Introduction}
 \setlength{\textfloatsep}{0pt}
\noindent \lettrine[lines=2]{P}{ower} systems are becoming more reliant on renewable energy resources (RER) for power supply, in a much needed effort to decarbonize our electricity systems. From a stability perspective, the uncertainty and intermittent generation of RER may impede power system operations. {This results from the intermittent perturbations acting on the synchronous states~\cite{Li2017a}.}
As such, studying the transient stability and resilience of power systems with increased amounts of integrated RER has become essential. {A fundamental stability analysis approach} is typically provided by simulating the impact of RER integration through load disturbances and inertia changes, and then assessing the system settling time.

{Transient stability analysis entails the identification of potential disturbances that might lead to system instabilities~\cite{Bosetti2018}.} To assess system stability, Lyapunov stability methods are widely adopted~\cite{Chen2023}. The methods are divided into two categories: \textit{(i)} a direct stability method that is based on quantifying a Lyapunov energy function~\cite{Isbeih2019} and \textit{(ii)} a \textit{Lyapunov exponents} (LEs) method that is based on characterizing infinitesimal separation rates of system trajectories~\cite{Bosetti2018}. The former method is based on quantifying an energy function that is indicative of system stability. The latter method is adapted from the field of chaos and was first introduced by~\cite{Liu1994} in the context of power networks for studying transient chaotic swings. {The aforementioned direct stability method is well-developed; however, it becomes complex to quantify when considering nonlinear dynamical systems.} On the other hand, quantifying LEs of a nonlinear system arises more naturally~\cite{Czornik2019}. 

For power system stability, the maximal Lyapunov exponent (MLE)  is able to depict general characteristics of transient stability~\cite{Zong2023}. {A negative MLE sign indicates stability}, whereas a positive MLE informs that the system is unstable. Theoretically, LEs measure the average exponential rate of convergence and divergence of a pair of nearby trajectories in a multidimensional state-space. {The exponents are key to determining stability since the exponents are invariant to any initial condition perturbation within the same stability region~\cite{Shen2023}.} That being said, the main objective of this paper is to provide a framework for quantifying stability based on metrics related to the LEs of the system. The exponents are computed after an uncertain RER power injection is applied to a bus. The method quantifies power system stability and uncertainty propagation from negative loads that model injections of uncertain renewables. As such, the proposed method \textit{informs} the operator about the impact of allocating renewable injections, at network buses, on the overall power grid stability. 

\parstart{Literature Review} 
{The LE-based approach} for power system stability and its applications in quantifying network characteristics has gained recent interest. The literature can be divided into (i) simply quantifying the stability of a power network, and (ii) identifying network properties. {Power system stability is usually studied in terms of voltages, angles, and frequencies~\cite{Sobbouhi2021}.} A method based on LEs~\cite{Yan2011}, is utilized to study the \textit{swings} that lead to instability by monitoring the rotor angle of generator machines. In~\cite{Wadduwage2013}, the transient stability of power systems post-fault is studied by analyzing the system's LEs. Voltage stability is considered from phasor measurement units (PMU) data in~\cite{Dasgupta2013}. The study considers a model-free approach that monitors voltage data and then estimates the systems LEs for stability predictions. In~\cite{Bosetti2018}, global system stability is studied by computing LEs and their corresponding Lyapunov vectors. The system machines are modeled according to the swing equations {and then the proposed approach is demonstrated on a $9$-bus network.} A model-free estimation framework is considered for assessing rotor angle stability in~\cite{Wei2018}. The proposed method relies on least-squares state {estimation to compute rotor angle estimates from PMU measurements.} {Similarly,~\cite{Zong2023} considers rotor angle stability from MLE estimation while utilizing a recursive least squares estimation framework that is less time-consuming.}

The LE-based stability methods outlined above consider the power system dynamics by \textit{(i)} modeling simplified and reduced-order power machine models or \textit{(ii)} utilizing model-free approaches that rely on PMU measurement data. Moreover, these stability methods consider only voltage stability and rely on dynamic state estimation methods for estimating rotor angle stability. The main drawback of such methods are as follows. Simplified machine models that do not model the both differential equations and algebraic constraints are not equipped to model RER loads along with their uncertainty. The reason is that renewables injections do not explicitly appear in the differential equations but appear in the algebraic power flow constraints. On the other hand, a direct model-free approach is only appropriate for disturbance and noise free systems~\cite{Escot2023}. The reason is that any disturbance or uncertainty would be wrongly considered as chaotic behavior when computing LEs. As such, there is no clear way to explicitly delineate how renewable injections would impact grid dynamics and the different types of stability. Moreover, when relying on estimation frameworks for rotor angle stability, estimation errors can also be incorrectly interpreted as chaos in the system. 
Accordingly, in this paper we compute a power system's LEs by considering a model-based approach that models the complete nonlinear differential algebraic power system model and therefore depicts both differential and algebraic states.

The second branch of research that goes beyond simply studying the stability of power systems and has recently gained interest,{ is the application of LE methods that enable network properties identification.} The studies~\cite{Khaitan2013, Rashidi2016} both utilize a LE-based algorithm to identify coherent generator groups that generate synchronous swings. The effect of chaos from wind power is predicted from a model decomposition approach that is based on LEs computations in~\cite{Safari2017}. {An optimal PMU allocation approach for full network observability and improved transient stability is proposed in~\cite{Rashidi2016a}.} The approach relies on LEs to identify the critical buses that provide the most useful information for each system node. {Nonetheless, influential nodes in power systems play a critical role in influencing the structure and dynamics of the network~\cite{Rajeh2023}.} The allocation of RER is proposed in~\cite{Yoon2020}. {The method offers a heuristic placement algorithm that estimates LEs based on voltage measurements and then ranks the candidate buses according to the LEs computed from voltage stability.} However, the framework proposed in~\cite{Yoon2020} offers a candidate node rank that \textit{(i)} is 
based on model-free LE estimation, where instability can result from measurement noise, \textit{(ii)} computes LEs while considering chaos resulting {from only voltage stability,} and \textit{(iii)} quantifies node stability by measuring the LEs at that node while not taking into account uncertainty propagation onto the other nodes.

\noindent \textbf{Paper Contributions.}\; 
Motivated by the aforementioned limitations in the recent literature, the main contributions of this work are summarized as follows.
\begin{itemize}[leftmargin=*]
	\item We introduce a framework that enables the node stability assessment of an NL-DAE representation of a power network. In specific, we provide quantitative stability measures for identifying stable and unstable nodes in a power network from a Lyapunov stability perspective. The proposed stability measures are based on the Lyapunov characteristic spectrum of exponents, which are computed for a system under RER load perturbations.
	These stability metrics allow for (i) computing the overall stability by considering frequency, voltage, and rotor angle stability and (ii) {quantifying renewables uncertainty propagation, applied to a single node, on the overall power network stability.}
	\item Based on the introduced stability framework, we propose an algorithm for identifying optimal stable nodes that can accommodate renewables injections. This approach allows to inform the operator where to practically allocate renewables while maintaining the overall stability of the power network that is inundated with uncertain fluctuations from RER injections. The result is an ordered set of nodes that {reflects stable nodes with minimal impact on overall power system stability.}
\end{itemize}	

In addition to the theoretical contributions of this paper, we demonstrate the proposed framework on two distinct power grids: a $4^{\text{th}}$-order system and a $9^{\text{th}}$-order system that includes two PV solar plants. The stability framework illustrates the optimality of the identified stable nodes under uncertain RER perturbations, modeled either by inducing perturbed negative RER injections or by perturbing the solar irradiance on the PV plants.

We note that in this paper we do not review methods for RER allocation or sizing. Such methods can consider different operational objectives (efficiency maximization, and  power loss and cost minimization). Instead, we are concerned with quantifying the effect of RER allocation on the overall stability of a power network and, thus, provide a method to inform operators regarding the practical allocation of RER from a stability perspective. Readers are referred to~\cite{Bazrafshan2019} for a more comprehensive review on RER allocation and sizing.

\parstart{Broader Impacts}The model-based stable node identification framework enables the utilization of general NL-DAE dynamical models of power grids. As such, based on the joint modeling of differential and algebraic system states, we can capture and model uncertain RER power generation from sources such as PV plants and wind farms. This allows for quantifying the uncertainty propagation and assess its overall stability implications on the entire power grid. Furthermore, system operators can consider different stability aspects—such as voltage, frequency, and rotor angle stability—by considering the proposed parameterized stability measures. This, in turn, results in a framework that provides system operators with quantifiable measures regarding which network buses maintain grid stability in the presence of uncertain RER inputs{, whether such RERs are already integrated into the network or to yet be allocated.}

%

\parstart{Notation} Let $\mathbb{N}$, $\mathbb{R}$, $\Rn{n}$, and $\Rn{p\times q}$ denote the set of natural numbers, real numbers, real-valued row vectors with size of $n$, and $p$-by-$q$ real matrices respectively. The symbol $\otimes$ denotes the Kronecker product. The cardinality of a set $\mc{N}$ is denoted by $|\mc{N}|$. The operators $\mr{det}(\m{A})$ returns the determinant of matrix $\m{A}$,  $\mathrm{trace}(\m{A})$ returns the trace of matrix $\m{A}$. The operator $\{\m{x}_{i}\}_{i=0}^{\mr{N}}  \in \Rn{\mr{N}n}$ constructs a column vector that concatenates vectors $\m{x}_i \in \Rn{n}$ for all $i \in \{0, 1,\ldots, \mr{N}\}$.

\parstart{Paper Organization} The paper is organized as follows:~Section~\ref{sec:prelims} provides the power system model.~Section~\ref{sec:IdentStability} provides preliminaries on LEs computation.~Section~\ref{sec:main} introduces the proposed stable node identification framework and the optimal RER allocation problem for NL-DAEs. The numerical results are presented in Section~\ref{sec:casestudy}, and Section~\ref{sec:summary} concludes this paper. 
\section{Problem formulation}\label{sec:prelims}
In this section, we present the underlying NL-DAE power system model and represent the full model as a nonlinear ordinary differential equation (NL-ODE) model. The resulting NL-ODE system, that includes the algebraic constraints, is discretized using the {trapezoidal implicit (TI)} discretization method.
\subsection{NL-DAE power system state-space model}\label{subsec:NDAE}
We consider a NL-DAE formulation of a power system in semi-implicit form~\cite{Milano2016}. To this end, we borrow the notation from~\cite{Nugroho2022}. A power system $(\mc{N} ,\mc{E})$ can be represented graphically, where $\mc{E} \subseteq \mc{N} \times \mc{N}$ is the set of transmission lines, $\mc{N} = \mc{G} \cup \mc{L} \cup \mc{R} $ is the set of all buses in the network. Set $\mc{G}=\{1, \ldots, G\}$ denotes the set of synchronous generators, $\mc{R}=\{1, \ldots, R\}$ represents the set of buses with solar power plants or renewable load injections, and $\mc{L}$ collects set of load buses. The number of buses within the network is $N := |\mc{N}|$, while the number of generator, load and renewable buses are $G := |\mc{G}|$, $L := |\mc{L}|$ and $R := |\mc{R}|$. The physics-based components of the electromechanical transients can be written in a semi-implicit NL-DAE form as follows
\begin{subequations}\label{eq:semi_NDAE_rep}
	\begin{align}
		\textit{generator dynamics}:	\;\; \dot{\m x}_{d} &=  \m{f}(\m x_d,\m x_a, \m u) \label{X_d},\\ 
		\textit{algebraic constraints}:	\;\; \m 0 & = \m{g}(\m x_d, \m x_a),\label{X_a}
	\end{align} 
\end{subequations}
where the dynamic states of the synchronous machines, PV plants, motor loads and loads can be represented as $\m{x}_{d} := \m{x}_{d}(t) \in \Rn{n_d}$, the algebraic states can be defined as $\m{x}_{a} := \m{x}_{a}(t) \in \Rn{n_a}$, and the control input of the system can be defined as $\m{u}:=\m{u}(t)\in \Rn{n_u}$. Here, $n_d$ and $n_a$ define the size of the differential and algebraic states, respectively, and $n_u$ defines the size of the input state vector $\m{u}$. Nonlinear mapping function $\m{f}(\cdot)$ can represent the synchronous machine, excitation system, governor, turbine dynamics, and solar power plant dynamics, where $\m{f}(\cdot): \Rn{n_d}\times\Rn{n_u}\times\Rn{n_a}\rightarrow\Rn{n_d}$. Nonlinear mapping function $\m{g}(\cdot)$ depicts the algebraic constraints and the power balance equations, where $\m{g}(\cdot): \Rn{n_d}\times\Rn{n_a}\rightarrow\Rn{n_a}$.

In this manuscript, we consider two power system models: \textit{(i)} the standard two-axis $4^{\text{th}}$-order transient model of a synchronous generator~\cite[Ch. 7]{Sauer2017}, and \textit{(ii)} the comprehensive $9^{\text{th}}$-order transient model of a synchronous generator interconnected with a $12^{\text{th}}$-order grid-forming PV plant model for $i \in \mc{R}$~\cite{Roy2023}. The model describes the DC link and PV array dynamics, DC/AC converter and LCL filter dynamics, and voltage/current regulators models. The physics-based components of the two power network models are summarized in Appendix~\ref{apndx:power_model} and Appendix~\ref{apndx:power_model2}.

\subsection{NL-DAEs: existence and uniqueness of a solution}\label{subsec:solutionexist}
The uniqueness and existence of a solution for NL-DAEs, as compared to NL-ODEs, for any initial condition $(\m{x}_0,\m{u}_0)$ is not always guaranteed~\cite{Guo2021}. The existence and uniqueness of solutions to NL-DAEs holds true if and only if the system~\eqref{eq:semi_NDAE_rep} is \textit{strangeness-free}~\cite[Hypothesis 4.2]{Kunkel2006}, whereby the strangeness index is equal to zero. For brevity, refer to
~\cite{Kunkel2006} for the detailed hypothesis that defines the strangeness index of NL-DAEs. We note here that this index is a generalization of the differentiation index of DAEs, which is defined as follows.
\begin{mydef}\label{def:inde(1)}
	The differentiation index~\cite{Kunkel2006} 
	of nonlinear or linear DAE systems is defined as the number of times the algebraic equations are differentiated to obtain a set of ODEs.
\end{mydef}
A linear DAE can be written as $\m{E}\dot{\m{x}} = \m{A}\m{x} + \m{B}\m{u}$. This representation can be formulated by linearizing the system~\eqref{eq:semi_NDAE_rep} around an operating point 
where the state vector $\m{x}$ is defined as $\m{x}:= \left[\m{x}_d,\; \m{x}_a\right]^{\top}\in \Rn{n_d+n_a}$. The singular mass matrix $\m{E} \in \mathbb{R}^{n_d+n_a}$ has ones on its diagonal for the differential equations and zeros for the algebraic equations. The time-invariant state-space matrices are defined as $\m{A} :=\left[\m{A}_d, \; \m{A}_a\right]^{\top}  \in \mathbb{R}^{n_d+n_a}$ and $\m{B}:= \left[\m{B}_d\right]\in \mathbb{R}^{n_u}$. Here $\m{A}_d \in \Rn{n_d}$ represents the linearized differential equations state-space matrix and $\m{A}_a \in \Rn{n_a}$ the linearized algebraic equations state-space matrix.

We note that a linear DAE system is of index one if and only if the system is regular. Regularity is considered an essential property for DAEs. This property ensures the existence of consistent unique solutions for every initial condition $(\m{x}_0,\m{u}_0)$~\cite{Grob2016}.
\begin{mydef}\label{def:regular}
	A linearized DAE around an initial state $\m{x}_0$ is regular if and only if there exists $s \in \mathbb{C}$ such that
	$
	\mr{det}(s\m{E} - \m{A}) \neq 0.
	$ Regularity is therefore characterized by the matrix pair $(\m{E}, \m{A})$.
\end{mydef}
A DAE power network model is regular if and only if there exists a path whereby each load follows this path to a generator bus~\cite{Grob2016}. 
This holds true for the IEEE case networks considered in the case studies section in this work. The regularity and differentiation index of linearized power system~\eqref{eq:semi_NDAE_rep} are assessed in~\cite{Nugroho2022}. It is shown that the power system has a differentiation index of one and is regular;~see~\cite{Nugroho2022} for additional information. Such conditions guarantee that for each consistent initial condition, a unique solution always exists. With the evidence that the linearized DAE system, we can now infer that the NL-DAE is strangeness-free without the rigorous proof. Hence, for certain rank conditions every regular and linear DAE with sufficiently smooth $(\m{E}, \m{A})$ matrix pair satisfies the hypothesis regarding the vanishing strangeness index, i.e., the system is strangeness-free;~see~\cite{Kunkel2006}. {In order to ensure} that there exists a solution to the NL-DAE power system~\eqref{eq:semi_NDAE_rep}, the following assumption holds true throughout this paper. 
\begin{asmp}\label{assump:index}
	The NL-DAE~\eqref{eq:semi_NDAE_rep} is considered strangeness-free, regular and of differentiation index one. As such, a unique solution exists for any initial conditions $(\m{x}_0,\m{u}_0)$.
\end{asmp}
The aforementioned assumption implies that the partial derivatives of the differential and algebraic state-space functions~$\m{f}(\cdot)$ and $\m{g}(\cdot)$ with respect to $\m{x}_{d}$ and $\m{x}_{a}$ are non-singular~\cite{Kunkel2006}. We note here that this assumption is mild and holds true for the power system cases considered in the numerical studies section. 
In practical terms, the numerical solvability of the DAE system~\eqref{eq:semi_NDAE_rep} directly implies the existence and uniqueness of solution for any input and consistent initial conditions~\cite{Crow2015}. The time-domain simulations presented in Section~\ref{sec:casestudy} provide explicit proof and validate Assumption~\ref{assump:index} for the NL-DAE power system~\eqref{eq:semi_NDAE_rep}.
\subsection{NL-DAE to NL-ODE system transformation}\label{subsec:Transformation}
NL-DAEs are considered stiff dynamical systems with time constants that can span several orders of magnitude. 
The algebraic constraints are considered to exhibit null time constants. Stability and control of such NL-DAE systems are limited, while that of NL-ODE systems is not. Typically, {an ODE system representation is considered for power system dynamics modeling by either} (i) neglecting the algebraic constraints or (ii) relying on a decoupled model~\cite{Nugroho2022}. This limits the ability to assess the transient stability of power system, in particular, power networks that include RERs. With that in mind, we consider a structure preserving  
NL-DAE to NL-ODE transformation of the power system model~\eqref{eq:semi_NDAE_rep}. This transformation allows for the complete modeling of the differential and algebraic equations~\eqref{eq:differential_dynamics}–\eqref{eq:power_balance}, while representing them as a set of ODEs.

The transformation relies on applying {the inverse function theorem} (IFT)~\cite[Theorem 3.3.1]{Krantz2013} to resolve the algebraic constraints into ODEs. This requires the differentiation of the algebraic constraints in~\eqref{X_a} with respect to time variable $t$. With that in mind, the NL-DAE system~\eqref{eq:semi_NDAE_rep} can be rewritten as 
\begin{subequations}\label{eq:semi_NDAE_rep-ODE}
	\begin{align}
		\dot{\m x}_{d} &=  \m{f}(\m x_d,\m x_a, \m u), \label{X_d_ode} \\
		\dot{\m{x}}_a & = \tilde{\m{g}}(\m{x}_d,\m{x}_a,\m{u})= -(\mG_{\m{x}_a})^{-1}\mG_{\m{x}_d}\m{f}(\m{x}_d,\m{x}_a,\m{u}),~\vspace{-0.2cm}\label{X_a_ode}
	\end{align} 
\end{subequations}
where matrices defined as $\mG_{\m{x}_a} :=\frac{\partial \m g(\m{x}_d,\m{x}_a) }{\partial \m{x}_a}$ $\in \Rn{n_{a}\times n_{a}}$  and $\mG_{\m{x}_d} :=\frac{\partial \m g(\m{x}_d,\m{x}_a) }{\partial \m{x}_d}$ $\in \Rn{n_{a} \times n_{d}}$ are the Jacobian matrices of the algebraic constraints with respect to $\m{x}_a$ and $\m{x}_d$, respectively. 

The NL-ODE system~\eqref{eq:semi_NDAE_rep-ODE}, obtained using the IFT method, depicts the full dynamic and algebraic relationships that are inherent to the system while also being modeled as a set of ODEs. We note that the resulting NL-ODE system representation of the algebraic constraints~\eqref{X_a} depends on~\eqref{X_d}; it therefore depends on control input $\m{u}$.
Readers are referred to~\cite{Kazma2023b} for a more though discussion regarding the validity and accuracy of the transformed NL-DAE model.
\subsection{Discrete-time modeling of NDAE power systems}\label{subsec:Discretization}
Nonlinear systems, in particular DAE power system models, exhibit stiff dynamics. {Network systems} with stiff dynamics can be characterized by time constants on local nodes or subsystems that have a significant contrasting magnitude. The discretization method of choice relies on the stiff dynamics and the desired accuracy of the discretization. {In practice, to obtain a stable and computationally efficient solution,} power systems that exhibit transient conditions are solved numerically using implicit discretization methods~\cite{Crow2015}. Explicit methods such as the implicit Runge-Kutta (IRK) method cannot deal properly with stiff dynamics~\cite{Milano2016}. 

{In power system discrete-time modeling,} the implicit discretization methods utilized are as follows: \textit{(i)} backward Euler (BE) method~\cite{Milano2016}, \textit{(ii)} backward differential formulas (BDF) known as Gear's method~\cite{Gear1971}, and \textit{(iii)} trapezoidal implicit (TI) method~\cite{Milano2016}. 
In~\cite{Kazma2023b}, BDF and TI methods are investigated for the dynamical system in~\eqref{eq:semi_NDAE_rep}. The results show an accurate depiction of the transient dynamical states. A such, for the purpose of characterizing the LEs of the NL-DAE system~\eqref{eq:semi_NDAE_rep-ODE}, we refer to the use of the TI discretization method. The main advantage of implicit TI method is that it can handle a large class of stiff nonlinear dynamical systems with large discretization time steps.
To such end, the nonlinear power system dynamics~\eqref{eq:semi_NDAE_rep-ODE} can be rewritten in discrete-time as 
\begin{subequations}\label{eq:disc_ssm_NDAE}
	\begin{align}
		 \m{x}_{d,k}  &= \m {x}_{d,k-1} + \tilde{h} \left(\m{f}(\m{z}_{k})+\m{f}(\m{z}_{k-1})\right),\\
		\m{x}_{a,k} &=  \m {x}_{a,k-1} + \tilde{h} \left(\tilde{\m{g}}(\m{z}_{k})+\tilde{\m{g}}(\m{z}_{k-1})\right),
	\end{align}
\end{subequations}
where vector $\m{z}_{k} :=[\m{x}_{d,k}, \;\m{x}_{a,k},\;\m{u}_{k}]^{\top}\in \Rn{n_d+n_a+n_u}$ and $\m{x}_{k} :=[\m{x}_{d,k},\; \m{x}_{a,k}]^{\top}\in \Rn{n_d+n_a}$ for time step $k$. The discretization time step size $\tilde{h}$ is defined as $\tilde{h} :=0.5h $, where $h$ is the simulation time step size.

The Newton-Raphson (NR) algorithm is implemented to solve a system of implicit discrete-time NL-DAEs;~see~\cite{Milano2022}. 
For time-domain simulations, the algorithm computes the unknown future states $\m{x}_{k}$ using an iterative framework until solution convergence, and then it advances to the next time step. Refer to Appendix~\ref{apndx:NR-method} for a detailed description of the NR method.  
Moving forward we shall imply the dependence on control input, that is, $\m{f}(\m{z}_k):= \m{f}(\m{x}_k)$ and $\tilde{\m{g}}(\m{z}_k):= \tilde{\m{g}}(\m{x}_k)$.
The discrete-time nonlinear power system~\eqref{eq:disc_ssm_NDAE} can be written succinctly as 
\begin{equation}\label{eq:disc_NDAE-ODE} 
	\m{x}_k =  \m{x}_{k-1} + \tilde{h}\left[\hat{\m{f}}(\m{x}_{k}) + \hat{\m{f}}(\m{x}_{k-1})\right],
\end{equation}
where nonlinear function $ \hat{\m{f}}(\m{x}_{k}):= \left[
\m{f}(\m{x}_{k}), \; \tilde{\m{g}}(\m{x}_{k})\right]^{\top}\in  \Rn{n}$, such that $n:= n_d +n_a$ represents the differential and algebraic states.  
\section{A Lyapunov-based stability approach}\label{sec:IdentStability}
This section provides a brief background on the stability of dynamical systems from a chaos and ergodic theory perspective; it describes the rationale and practical implementation of LEs for stability of nonlinear systems.
\subsection{Lyapunov stability: maximal Lyapunov exponent}\label{subsec:VarDyn}
A well-known method for assessing the stability of a nonlinear system’s trajectory stems from Lyapunov’s indirect method of stability~\cite{Leszczynski2022}. This method, common to the fields of chaos and ergodicity, involves computing the Lyapunov spectrum of exponents. These exponents represent the exponential convergence or divergence of nearby perturbed trajectories of an attractor in the state-space~\cite{Wei2018}. 
{LEs, denoted by $\m{\lambda}$,} are an indication of the asymptotic behavior of dynamical systems~\cite{Sun2012}. As such, this method presents a tool to assess the systems sensitive dependence of initial conditions, i.e., analyze the system's response to {transients induced by perturbed initial conditions.} To calculate the LEs of nonlinear power system models, the use of the dynamical system’s variational form becomes necessary~\cite{Hayes2018a}.

Consider two nearby trajectories $\m{x}_{k}$ and $\m{x}_{k} + \m\delta{\m{x}_k}$ resulting from initial states $\m{x}_{0}$ and $\m{x}_{0} + \m\delta{\m{x}_0}$. Note that $\m \delta{\m{x}_0}\in \Rn{n}$ is an infinitesimal perturbation $\eps>0$ to initial conditions $\m{x}_0$ and its exponential decay or growth for $k \in \{0\;, 1\; ,\;\ldots\;,\mr{N}-1\}$ is denoted as $\m \delta\m{x}_k\in \Rn{n}$. The initial conditions are the operating conditions obtained by solving the power flow equations. Consequently, the uncertainty from renewables are modeled by perturbing the initial loads in power flow equations. {Note that $\mr{N}$ is the discrete-time simulation interval, defined as $\mr{N}:=t/h$.} To that end, the variational form of the discrete-time nonlinear system~\eqref{eq:disc_NDAE-ODE} can be written as
  \begin{equation}\label{eq:DiscVarState}
 	\m{\delta}\m{x}_{k} =
 	\m{\Phi}_{0}^{k}(\m{x}_0) \m{\delta}\m{x}_0,
 \end{equation}
where $\m{\Phi}_{0}^{k}(\m{x}_0):=\left(\m{I}_{n}  +\tfrac{\partial\tilde{\m{f}}( \m{x}_{k},\m{x}_{k-1})}{\partial\m{x}_{k-1}}\right)\tfrac{\partial\m{x}_{k-1}}{\partial\m{x}_{0}} \in \Rn{n \times n}$ defines the variational mapping function, such that $\m{\Phi}_{0}^{0}(\m{x}_0)=\m{I}_{n}$ and matrix $\m{I}_{n} \in \Rn{n\times n}$ is an identity matrix. Readers are referred to~\cite{Kawano2021} 
for the derivation of the variational system~\eqref{eq:DiscVarState}. The variational system~\eqref{eq:DiscVarState} depicts how small system disturbances evolve along the system trajectory. {That is,~\eqref{eq:DiscVarState} allows the quantification of the propagation of uncertain load disturbances within the power system.} For ease of notation, moving forward we remove the dependency of $\m{\Phi}_{0}^{k}(\m{x}_0)=\m{\Phi}_{0}^{k}$ on $\m{x}_0$. 
\begin{myrem}\label{rmk:Phi}
Notice that $\m{\Phi}_{0}^{k}$ represents the derivative of~\eqref{eq:disc_NDAE-ODE} with respect to $\m{x}_0$ for $k \in \{0\;,1\;,\;\ldots\;,\mr{N}-1\}$. This being said, the transition matrix $\m{\Phi}_{0}^{k}$ requires the knowledge of $\m{x}_k$ for all $k$. As such, we can apply the chain rule to evaluate $\m{\Phi}_{0}^{k}$ for any time-index $k$ as 
\begin{equation}\label{eq:Phi}
		\m{\Phi}_{0}^{k}
		= \m{\Phi}_{k-1}^{k}  \m{\Phi}_{k-2}^{k-1}
		\; \ldots \; 	\m{\Phi}_{0}^{1}\m{\Phi}_{0}^{0} 
		= \prod^{i=k}_{1}\m{\Phi}^{i}_{i-1}.
\end{equation}
\end{myrem}

The subsequent assumption is introduced to ensure that the system's response to perturbations or uncertainties remains bounded along the system trajectory. This is equivalent to considering a bound on the Jacobian of the nonlinear dynamical system, that is, $\norm{\m{J}(\tilde{\m{f}}(\cdot))} = \norm{\frac{\partial \tilde{\m{f}}(\cdot)}{\partial \m{x}_k}} < \infty$; see~\cite{Nugroho2022b}.
\begin{asmp}\label{assump:bounded}
Consider the variational system~\eqref{eq:DiscVarState}, the state-transition matrix $\m{\Phi}_{0}^{k}$ is bounded, i.e., 
$
\sup \left \{\;\| \m{\Phi}_{0}^{k}\|: k \in \mathbb{N}\; \right\}<\infty .
$
\end{asmp}

It follows that for every $\m{\delta}\m{x}_0 \in \Rn{n}$ there exists a unique solution for $\m{\delta}\m{x}_k \; \forall \; k \in \{0\;,1\;,\;\ldots\;,\mr{N}-1\}$. Assumption~\ref{assump:bounded} is a direct implication of the regularity condition of the nonlinear system~\eqref{eq:semi_NDAE_rep} mentioned in Assumption~\ref{assump:index}. To that end, the maximal LE in finite-time starting from initial state $\m{x}_{0} \in \Rn{n}$ can be defined as
\begin{equation}\label{eq:maxLE}
	{\lambda}_{\mr{MLE}} := \sup \left \{\;  \lim_{k\rightarrow \mr{N-1}}
	\frac{1}{k}\log \left( \dfrac{\norm{\m{\delta}\m{x}_{k}}}{\norm{\m{\delta}\m{x}_0}}\right)\;\right\} ,
\end{equation}
where ${\lambda}_{\mr{MLE}}$ is the largest eigenvalue of time-evolution of variational state-vector $\m{\delta}\m{x}_{k}$ with respect to the initial perturbation $\m{\delta}\m{x}_0$, i.e., the spectral norm of Cauchy-Green tensor defined as~\eqref{eq:Cauchy} under the action of the logarithm function; see~\cite{Ershov1998}.
\begin{equation}\label{eq:Cauchy}
	\m{\Xi} := {\m{\Phi}_{0}^{k}}^{\top}{\m{\Phi}_{0}^{k}};
\end{equation}
where matrix $\m{\Xi} \in \Rn{n\times n}$ represents the deformation of the initial perturbation vector $\m{\delta}\m{x}_0$ along the trajectory of the system. The sign of the MLE ${\lambda}_{\mr{MLE}}$ indicates the exponential divergence or convergence about a state orbit; see Fig.~\ref{fig:LyapExp}. The exponent is related to the average expansion and contraction along the directions of the state-space, such that, for any initial perturbation $\m{\delta}\m{x}_0$ the stability of the system can be defined according to the following
\begin{itemize}[itemindent=*]
\item Unstable if ${\lambda}_{\mr{MLE}} > 0$, i.e., system is chaotic and diverges.
\item Stable if ${\lambda}_{\mr{MLE}} \leq 0$, i.e., system trajectory is attracted to a stable or periodic orbit.
\end{itemize}

The maximal LE throughout the literature is typically applied to study a single aspect of stability, that being frequency, voltage, or rotor angle stability. In this paper, we present a method for stability assessment that considers all the above aspects simultaneously. As such, the following section introduces an LE-based stability method that entails computing the full spectrum of exponents associated with stability along different state-space trajectories.

\begin{figure}[t]
	\centering
	\subfloat{
		\begin{tikzpicture}[>=Stealth]
			\draw[thick] (0,0) 
			to[out=0,in=120] (2,-0.4) 
			to[out=-60,in=170] (4,0.1);
			
			\draw[thick,blue] (0,0.5)
			to[out=0,in=120] (2.1,-0.09)
			to[out=-60,in=160] (4,0.3);
			\draw[centered,->] (1,0.06) -- (1.1,0.06);
			
			\draw[dashed] (0,0) circle [radius=0.5];
			\draw[->,dashed] (0,0) -- (0,0.5);

			\draw[dashed] (4,0.1) circle [radius=0.2];
			\draw[->,dashed] (4,0.1) -- (4,0.3);
			
			\draw[thick,->] (1,0.5) -- (1.1,0.5);
			
			\draw (-0.8,0) node {$({a})$};
			\draw (0,-0.2) node {$\m{x}_0$};
			\draw (0,0.7) node {$\m{x}_0+\m{\delta}\m{x}_0$};
			
			\draw (4,-0.25) node {$\m{x}_t$};
			\draw (4,0.55) node {$\m{x}_t+\m{\delta}\m{x}_t$};
			\draw (5.5,0.275) node {$\m{\lambda_{MLE}} \leq 0$};
	\end{tikzpicture}}{}{}\vspace{-0.5cm}
	\subfloat{
		\begin{tikzpicture}[>=Stealth]
			\draw[thick] (0,0) 
			to[out=0,in=120] (2,-0.4) 
			to[out=-60,in=170] (4,0);
			
			\draw[thick,red] (0,0.5)
			to[out=0,in=120] (2.4,0.12)
			to[out=-60,in=-150] (2.8,0.16)
			to[out=35,in=-100] (4,0.8);
			
			\draw[centered,->] (1,0.06) -- (1.1,0.06);
			
			\draw[dashed] (0,0) circle [radius=0.5];
			\draw[->,dashed] (0,0) -- (0,0.5);

			\draw[dashed] (4,0) circle [radius=0.8];
			\draw[->,dashed] (4,0) -- (4,0.8);
			
			\draw[thick,->] (1,0.58) -- (1.1,0.59);
			
			\draw (-0.8,0) node {$({b})$};
			\draw (0,-0.2) node {$\m{x}_0$};
			\draw (0,0.75) node {$\m{x}_0+\m{\delta}\m{x}_0$};
			\draw (4,-0.2) node {$\m{x}_t$};
			\draw (4,1) node {$\m{x}_t+\m{\delta}\m{x}_t$};
			
			\draw (5.5,0.5) node {$\m{\lambda_{MLE}} > 0$};
	\end{tikzpicture}}\vspace{-0.5cm}
	\caption{Trajectory of nearby orbits starting from initial state $\m{x}_0$ and perturbed initial state $\m{x}_{0}+\m{\delta}\m{x}_{0}$: $(a)$ converging and $(b)$ diverging trajectories.}\label{fig:LyapExp}
\end{figure}
\subsection{Computation of Lyapunov spectrum of exponents}\label{subsec:LyapTheory}
The MLE is sufficient to determine the asymptotic stability of a dynamical system. In addition to asymptotic stability assessment, computing the spectrum of LEs provides a characterization of the behavior of system along all direction of the phase-space. That being said, the spectrum provides an identification of the direction or nodes of a system that are sensitive to a perturbation in initial conditions, i.e., it allows for the understanding of uncertainty propagation along the dynamic trajectories of a system. The spectrum of LEs can be computed as follows
\begin{equation}\label{eq:specLyapExp}
		\m{\Lambda}
		:=  \mr{spec}\left\{ \lim_{k\rightarrow \mr{N-1}} \frac{1}{k}\log  \left(\norm{{\m{\Phi}_{0}^{k}}}\right)\right\},
\end{equation}
where $\m{\Lambda} \in \Rn{n \times n}$ is a diagonal matrix representing the spectrum of LEs. Note that for a finite dimensional vector space the spectrum coincides with the eigenvalues of the matrix. This indicates that the spectrum of LEs are exactly the eigenvalues corresponding to the covariant Lyapunov vectors. For any regular dynamical system, these vectors represent the local decomposition of the phase space~\cite{Ginelli2007}. The local covariant vectors along with the exponents inform us on the local stability behavior along the phase trajectory of the system~\cite{Bosetti2018}.
\begin{myrem}\label{rmk:regular}
	The assumption pertaining to the regularity of the dynamical system~\eqref{eq:semi_NDAE_rep} is mild. The existence of the full spectrum of LEs is well-established as a result of the multiplicative ergodic theorem (Oseledets Theorem) proven in the 1960s; see~\cite{Barreira1998}. 
\end{myrem}
The existence of the full spectrum of LEs guarantees regularity based on Oseledets' Theorem. The reason is that computing LEs requires the computation of well-defined Jacobian matrices of the nonlinear system. These Jacobian matrices correspond to the variational mapping function in~\eqref{eq:DiscVarState}, whose existence is ensured by the properties guaranteed by Oseledets' Theorem. Therefore, regularity, defined by the existence and smoothness of such Jacobian matrices, is implied. It follows that for general discrete-time systems regularity is provided as a result of the aforementioned theorem~\cite{Manneville1990, Frank2018,Tranninger2020, Martini2022}. This result eliminates the technical challenges of computing the LEs while requiring to verification of the regularity assumptions.

 Computation of the spectrum of LEs is well-established~\cite{Ershov1998,Dieci2008}.
  However, generally the fundamental mapping function $\m{\Phi}_{0}^{k}$ for the variational system~\eqref{eq:DiscVarState} is ill-conditioned. Therefore, the error in computing the LEs increases and thus the solution tends to converge in the direction of the MLE~\cite{Masarati2015,Frank2018, Balcerzak2020}. To avoid such problems when practically computing the spectrum of exponents, one can rely on re-orthogonalization of the local directions of the mapping function $\m{\Phi}_{0}^{k}$ along the system trajectory. This results in a transformation of $\m{\Phi}_{0}^{k}$ to an upper triangular matrix. The existence of such triangular matrix for a regular dynamical system is a direct result of Perron's lemma~\cite[Lemma 1.3.3]{Barreira1998}.
 
There are two orthogonal factorization classes for estimating the LEs from the triangular matrix form (i) continuous QR- and (ii) discrete QR-method. We compute the spectrum of LEs by utilizing the discrete QR-method;~see~\cite{Dieci2008}. The matrix $\m{\Phi}^{k}_{0}$ and its triangular factor $\m{R}^{k}_{0}$ are directly evaluated by a re-orthogonalization integration via discrete-QR orthonormalization. For $k \in \{0,\;1,\ldots,\;\mr{N}-1\}$ the discrete-QR factorization of transition matrix $\m{\Phi}^{k}_{0}$~\eqref{eq:Phi} can be written as
\begin{equation}\label{eq:QRFactorization}
	\m{\Phi}^{k}_{0} = \m{Q}_{k}\m{R} ^{k}_{0}, \quad \m{\Phi}^{0}_{0} = \m{Q}_{0}\m{R} ^{0}_{0} = \m{I}_n,
\end{equation}
where $\m{Q}_{k} \in \Rn{n\times n}$ is an orthogonal matrix and $\m{R} ^{k}_{0}\in \Rn{n \times n}$ is a positive upper triangular matrix. Due to the uniqueness of the solution of the QR-factorization, as a result of regularity, the triangular matrix $\m{R} ^{k}_{0}$ can be written as
 \begin{equation}\label{eq:QR-R}
	\m{R}_{0}^{k} = \m{R}_{k-1}^{k}  \m{R}_{k-2}^{k-1}
	\ldots 	\m{R}_{0}^{1}\m{R}_{0}^{0} = \prod^{i=k}_{1}\m{R}^{i}_{i-1}.
\end{equation}

{In order to attenuate} the ill-conditioned matrix computations from~\eqref{eq:specLyapExp}, the spectrum of LEs can be expressed as
\begin{equation}\label{eq:specLyapExpQR}
	\small
		\m{\Lambda}
		:=  \mr{spec}\left\{ \lim_{k\rightarrow \mr{N-1}} \frac{1}{k}\log  \abs{\prod^{i=k}_{1}\m{R}^{i}_{i-1}} \right\}.
\end{equation}
\begin{myrem}\label{rmk:QR}
Notice that the diagonal elements of $\m{R}_{0}^{k}$ are only required for the computation of the spectrum of LEs. It follows that to compute LEs, we evaluate  $\abs{\m{R}_{0}^{k}}$ instead of $\norm{\m{R}_{0}^{k}}$;~see~\cite{Dieci2011}.
\end{myrem}
The discrete QR-factorization for the variational transition matrix~\eqref{eq:QRFactorization} can be obtained according to the methods developed in~\cite{Dieci2008}. We summarize the implementation of the discrete-QR method in the following algorithm.
\begin{algorithm}[h]
	\caption{Discrete-QR Factorization Algorithm}\label{algorithm:DQR}
	\DontPrintSemicolon
	\textbf{Input:} Initial factorization $\m{\Phi}^{k}_{0} = \m{Q}_{k}\m{R} ^{k}_{0}$\;
	\textbf{Output:} QR factorization $\; \forall \; k \; \in \; \{1,2,\ldots,\mr{N}-1\}$\;
	\textbf{Initialize:} $\m{Q}_0$, $\m{R}^{0}_{0}$\;
	\ForAll {$ k \; \in \; \{1,2,\ldots,\mr{N}-1\}$}{
		\textbf{compute:} $\m{\Phi}_{0}^{k}(\m{x}_0)$ $\m{\Phi}_{0}^{k}=\left(\m{I}_{n}  +\tfrac{\partial\tilde{\m{f}}( \m{x}_{k},\m{x}_{k-1})}{\partial\m{x}_{k-1}}\right)\tfrac{\partial\m{x}_{k-1}}{\partial\m{x}_{0}}, \quad\m{\Phi}_{0}^{0}(\m{x}_0)=\m{I}_{n}$\;
		\textbf{compute QR factorization:}	$\m{\Phi}^{k}_{0}\m{Q}_{k-1} = \m{Q}_{k}\m{R} ^{k}_{k-1}$\;
		\textbf{record and update:} $ \m{Q}_{k-1} \leftarrow  \m{Q}_{k}$\;
		\textbf{record and update:}$\m{R} ^{k-1}_{k-2} \leftarrow \m{R} ^{k}_{k-1}$\;
	}
\end{algorithm}
\setlength{\textfloatsep}{0pt}
Computing the LEs enables the formulation of a scalable method that assesses the stability of NL-DAE systems against perturbed load injections from renewables and can potentially inform power system operators about the impact of renewable injections on overall system stability.
\section{Stable node identification and optimal renewables allocation}\label{sec:main}
The previous section introduces Lyapunov's method for the stability analysis of dynamical systems. In this section, we present a framework based on the aforementioned stability quantification method for identifying stable nodes in NL-DAE  power systems. The proposed framework enables the allocation of uncertain and intermittent loads from RERs while considering the impact on the overall power network stability.
\subsection{Stable node identification}\label{subsec:StableNodeIdent}
The high penetration of renewables, along with their dynamic behavior and lower system inertia, impose a challenge on the transient stability of the system. 
Typically, methods that assess stability based on the LEs of a power system consider model-free computations. This implies that the LEs are computed based on the voltage stability of the buses. However, although some methods mentioned in this paper consider rotor angle stability, they rely on state estimation. This indeed results in computational errors when estimating the LEs that measure the infinitesimal trajectory of state perturbations. In this paper, having provided a model-based approach for computing the LEs of a NL-DAE power system representation, we present a framework to evaluate node stability by considering a more comprehensive approach.  

With that in mind, we assess the stability of the nodes, i.e., buses, of a power system based on the several stability criteria: (i) voltage stability, (ii) rotor angle stability, and (iii) frequency stability. The stable node identification framework is based on computing the LEs while considering dynamic system response after a perturbed RER load. To study the stability of a power network after a load disturbance resulting from such intermittent and uncertain RER injections, we introduce a quantitative measure equivalent to computing the spectrum of LEs of the power system. The following proposition establishes this expression.
\begin{myprs}\label{prs:param_Lyapexp}
	The {parameterized} tensor matrix~\eqref{eq:Cauchy} representing the state-deformation along the trajectory of the nonlinear discrete-time power system~\eqref{eq:disc_NDAE-ODE} can be expressed as follows
	\begin{equation}\label{eq:paramLyap}
		\tilde{\m{\Xi}}(\m\gamma) := \sum_{j=1}^{n} \gamma_j 
		\left(\sum_{i=0}^{k}\left(\m{\varphi}^{i}_{0}\right)^\top \m{\varphi}^{i}_{0} \right) \; \in \; \Rn{n \times n},
	\end{equation}	
where $\m{\varphi}^{i}_{0}$ represents the column vectors of matrix $\m{\Phi}_{0}^{k}$. The parameterization $\gamma_j$ determines the states that are required for the stability assessment. That is, if $\gamma_j= 1$ then the state is considered for LE computations, else $\gamma_j= 0$. The {parameterized} vector $\m \gamma  \in \Rn{n}$ that represents the selected states is {defined as} $\nocol{\gamma}{j}{1}{n}$. 
\end{myprs}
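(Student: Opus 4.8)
The plan is to obtain \eqref{eq:paramLyap} directly from the definition of the Cauchy--Green deformation tensor \eqref{eq:Cauchy}, the chain-rule factorization of the transition matrix in Remark~\ref{rmk:Phi}, and a columnwise outer-product expansion of the associated Gram matrices, after which the selection vector $\m{\gamma}$ enters purely as a bookkeeping device that switches individual state directions on or off.

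First I would expand $\m{\Xi}={\m{\Phi}_{0}^{k}}^{\top}\m{\Phi}_{0}^{k}$ and use \eqref{eq:Phi} to track the cumulative transition matrices $\m{\Phi}_{0}^{i}$, $i=0,\dots,k$, each built from its predecessor by left multiplication with the one-step Jacobian factor $\m{I}_{n}+\partial\tilde{\m{f}}/\partial\m{x}_{i-1}$ and anchored at $\m{\Phi}_{0}^{0}=\m{I}_{n}$. Writing each $\m{\Phi}_{0}^{i}$ through its column vectors $\m{\varphi}^{i}_{0}$, the Gram matrix $(\m{\Phi}_{0}^{i})^{\top}\m{\Phi}_{0}^{i}$ decomposes into the sum of the rank-one outer products formed from those columns (the factor order $\m{A}^{\top}\m{A}$ versus $\m{A}\m{A}^{\top}$ being immaterial for the resulting spectrum); accumulating these contributions over the window $i=0,\dots,k$ and attaching the binary weight $\gamma_{j}$ to the $j$-th column term produces precisely the right-hand side of \eqref{eq:paramLyap}. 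This part is routine linear algebra; the care points are the row-vector convention of the notation paragraph, the base contribution coming from $\m{\Phi}_{0}^{0}=\m{I}_{n}$, and the observation that the accumulated form and the bare tensor \eqref{eq:Cauchy} carry the same exponents in the averaging limit of \eqref{eq:specLyapExp}.

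Next I would record the consistency properties that make $\tilde{\m{\Xi}}(\m{\gamma})$ a legitimate parameterization. For every $\m{\gamma}\in\{0,1\}^{n}$ the matrix $\tilde{\m{\Xi}}(\m{\gamma})$ is symmetric positive semidefinite, being a nonnegative combination of Gram terms, so the logarithm of its spectrum is well defined and the stability dichotomy of Section~\ref{subsec:VarDyn} still applies. The choice $\m{\gamma}=\ones$ restores the full accumulated deformation tensor and hence the entire Lyapunov spectrum, whereas zeroing $\gamma_{j}$ removes the $j$-th coordinate direction, i.e.\ restricts the deformation tensor to the subspace of retained states; this is the step that ties $\m{\gamma}$ to selecting the voltage, rotor-angle, or frequency coordinates used in the stability assessment.

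The principal obstacle is the interpretation layer rather than the algebra: one must argue that the eigenvalues of the \emph{parameterized} tensor are genuinely the Lyapunov exponents of the subsystem spanned by the selected states, not an artifact of the weighting. This amounts to identifying the column selection on $\m{\Phi}_{0}^{k}$ with a projection of the initial perturbation $\m{\delta}\m{x}_{0}$ onto $\tspan\{\m{e}_{j}:\gamma_{j}=1\}$ and then invoking regularity (Assumption~\ref{assump:index}, Remark~\ref{rmk:regular}) together with the boundedness in Assumption~\ref{assump:bounded}, so that Oseledets' theorem still guarantees the restricted exponents exist and are finite. Once this correspondence is in place, \eqref{eq:paramLyap} follows by combining it with the columnwise Gram decomposition above.
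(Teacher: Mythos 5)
Your core argument is essentially the paper's: the paper proves \eqref{eq:paramLyap} in one step by inserting the selection operator $\bmat{\m{I}\otimes\m{\gamma}}^{\top}\bmat{\m{I}\otimes\m{\gamma}}$ between the two factors of ${\m{\Phi}_{0}^{k}}^{\top}\m{\Phi}_{0}^{k}$, expanding into the rank-one terms $\left(\m{\varphi}_{0}^{i}\right)^{\top}\m{\varphi}_{0}^{i}$, and using idempotence $\m{\gamma}^{2}=\m{\gamma}$ of the binary entries --- which is exactly your columnwise Gram decomposition with $\gamma_{j}$ attached as a switch. The one place you diverge is the reading of the inner sum: you treat $\sum_{i=0}^{k}$ as an accumulation of the Gram matrices $(\m{\Phi}_{0}^{i})^{\top}\m{\Phi}_{0}^{i}$ over the time window (built via the chain-rule factorization of Remark~\ref{rmk:Phi}), and then have to argue separately that this accumulated object carries the same exponents as the bare tensor \eqref{eq:Cauchy} in the averaging limit; the statement itself, however, defines $\m{\varphi}_{0}^{i}$ as the columns of the single final-time matrix $\m{\Phi}_{0}^{k}$, so the paper's sum is over state directions, not over time, and no such limiting argument is invoked. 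Your additional observations (positive semidefiniteness of $\tilde{\m{\Xi}}(\m{\gamma})$, recovery of the full tensor at $\m{\gamma}=\ones$, and the appeal to Oseledets' theorem for the restricted exponents) are sound but go beyond what the paper's proof establishes, which is purely the algebraic identity.
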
 
\begin{proof}\label{proof:param_LyapExp}
	{From \eqref{eq:Cauchy}, being multiplied with parameterization vector $\m{\gamma}$, it follows that}	
	\begin{subequations}\label{proof:prop1-cauchy}
		\begin{align}
			\tilde{\m{\Xi}}(\m\gamma)  &= {\m{\Phi}^{k}_{0}}^{\top} \bmat{\m{I} \otimes \m{\gamma}}^\top \bmat{\m{I} \otimes \m{\gamma} }	\m{\Phi}^{k}_{0},\label{eq:proof:prop1-1} \\
			&= \sum_{i=0}^{k} \left({\m{\varphi}^{i}_{0}} \right)^\top \m{\gamma}^{2} \m{\varphi}^{i}_{0}
			= \sum_{i=0}^{k}\sum_{j=1}^{n}\gamma_j \left(\m{\varphi}^{i}_{0}\right)^\top \m{\varphi}^{i}_{0},\label{eq:proof:prop1-3}
		\end{align}
	\end{subequations}	
	where~\eqref{eq:proof:prop1-1} is a result of applying the dot product and the  Kronecker product. {Furthermore,~\eqref{eq:proof:prop1-3}} holds true since $\m \gamma^2 = \m \gamma$. The proof is complete since \eqref{eq:proof:prop1-3} is equivalent to \eqref{eq:paramLyap}.
\end{proof}
 We note that throughout this section we consider deformation matrix~\eqref{eq:Cauchy} instead of~\eqref{eq:specLyapExpQR} for developing the overall stability quantification measures. {However, in the case studies we implement the algorithms} by computing the spectrum of LEs based on the QR-factorization mentioned above. The reason is to simplify the proofs related to how we develop the stability measures and their relation to LEs. Such that under the QR factorized matrices, an unnecessary layer of complexity will be added to the proofs. 

The above {parameterized} deformation matrix~\eqref{eq:paramLyap} represents the deformation along the trajectory for the selected states. Such a parameterization allows for choosing, at each bus, the states that contribute to the stability computation. If a bus is a generator bus then we can choose from the differential states~\eqref{diffstate} and algebraic state vector~\eqref{algstate}. If it is a load bus we can choose from the algebraic state vector~\eqref{algstate}. {This allows, for a given bus, to choose a subset of states for the stability quantification. However when selecting more than one state, we obtain several LEs at each of the buses.} To present a singular LE for a bus, we first introduce some properties for LEs that are necessary for the unified quantification.  
\begin{mypro}\label{property:LyapProp} 
	LEs $\m{\lambda}$ exhibit the following properties
	\begin{align*}\vspace{-0.2cm}\label{eq:prop_LEs}
		(\mr{P}1.1) &\quad \m{\lambda}(\beta \m{A}) = \m{\lambda}( \m{A}) \; \forall \; \beta \in \mathbb{R} \backslash \{0\}, \\ \nonumber
		(\mr{P}1.2) &\quad \m{\lambda}(\m{A}_1+\m{A}_2) \leq \max\{\m \lambda(\m{A}_1),\m \lambda(\m{A}_2)\}, \nonumber
	\end{align*}
	where $\m{A}_1$ and $\m{A}_2$ are of dimension $\Rn{n \times n}$; see~\cite[Theorem 2.1.2]{Barreira1998}.
\end{mypro}
Based on the above properties, we now present the LE-based bus stability identification in NL-DAE models of power networks. The following proposition establishes such  stability quantification method.
\begin{myprs}\label{prs:stability-ident}
	Let $ i \; \in \; \mc{N}$ denote the bus index, such that $i_{\mc{G}} \; \in \; \mc{G}$ and $i_{\mc{L}} \; \in \; \mc{L} \cap \mc{R}$ represent the generator and load/renewable buses. {Then, the stability of a node in a power network, characterized by its LEs, can be expressed as follows}
\begin{equation}\label{eq:node_stable}
	\hspace{-0.05cm}\lambda_{i} \hspace{-0.05cm}:=\hspace{-0.05cm} \begin{cases}
			\lambda_{i_{\mc{G}}}\hspace{-0.05cm} = \hspace{-0.05cm}
			\underset{k\rightarrow \mr{N-1}}{ \lim} \left[\frac{1}{2k}\log \left(\tilde{\m{\Xi}}(\m\gamma)\right)\right],
			&\hspace{-0.2cm} \gamma_j\hspace{-0.05cm} \in\hspace{-0.05cm} \left\{ \delta_i, \omega_i,v_i \right\}_{i \; \in \; \mc{G}}, \\
			 \lambda_{i_{\mc{L}}}\hspace{-0.05cm} = \hspace{-0.05cm}
			\underset{k\rightarrow \mr{N-1}}{ \lim}  \left[\frac{1}{2k}\log \left(\tilde{\m{\Xi}}(\m \gamma)\right)\right],
			&\hspace{-0.2cm} \gamma_j \hspace{-0.05cm}\in \hspace{-0.05cm}\left\{v_i\right\}_{i \; \in\;  \mc{L}},
	\end{cases}
\end{equation}
where $j \in \Rn{n}$ represents the index for the differential and algebraic states defined by state vector $\m{x}_k$.
\end{myprs}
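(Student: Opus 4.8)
The plan is to derive the single per-bus exponent $\lambda_i$ by feeding the parameterized deformation tensor of \cref{prs:param_Lyapexp} into the Lyapunov-spectrum formula, and then using the properties of \cref{property:LyapProp} to collapse the several exponents that survive the state selection into one dominant value. So the proof is mostly a bookkeeping argument on top of \cref{prs:param_Lyapexp}, with the genuine work hidden in the justification that the relevant limits and spectral operations behave as expected.

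First I would rewrite the spectrum \eqref{eq:specLyapExp} through the Cauchy--Green tensor \eqref{eq:Cauchy}: since $\m{\Xi} = {\m{\Phi}_0^k}^\top\m{\Phi}_0^k$, its eigenvalues are the squared singular values of $\m{\Phi}_0^k$, so $\m{\Lambda} = \mr{spec}\{\lim_{k\to\mr{N}-1}\tfrac{1}{2k}\log\m{\Xi}\}$ — this is where the factor $\tfrac{1}{2k}$ in \eqref{eq:node_stable} comes from, as opposed to the $\tfrac1k$ in \eqref{eq:specLyapExp}. Substituting $\tilde{\m{\Xi}}(\m\gamma)$ from \eqref{eq:paramLyap} for $\m{\Xi}$ then restricts the deformation to the states with $\gamma_j=1$; the two branches of \eqref{eq:node_stable} correspond to the definitional choice of $\m\gamma$ — keep the indices of $\{\delta_i,\omega_i,v_i\}$ at a generator bus $i_{\mc{G}}$, and keep only the voltage index $v_i$ at a load/renewable bus $i_{\mc{L}}$.

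The substantive step is that this selection generally leaves one exponent per retained state, whereas \eqref{eq:node_stable} asserts a single $\lambda_i$. I would write $\tilde{\m{\Xi}}(\m\gamma) = \sum_j \gamma_j\,\m{\Xi}_j$ as a sum of positive semidefinite per-state contributions; applying $(\mr{P}1.2)$ inductively gives $\m\lambda(\tilde{\m{\Xi}}(\m\gamma)) \leq \max_j \m\lambda(\m{\Xi}_j)$, and the reverse inequality follows because each $\m{\Xi}_j \preceq \tilde{\m{\Xi}}(\m\gamma)$, so the dominant exponent of the restricted tensor equals the maximum over the retained states. Property $(\mr{P}1.1)$ then lets me drop the harmless scalar factor $\tfrac{1}{2k}$ and any multiplicity that $\m\gamma$ introduces, and passing to $k\to\mr{N}-1$ yields the finite-time $\lambda_i$ in \eqref{eq:node_stable}.

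The hard part, and the step I expect to need the most care, is that when $\m\gamma$ selects a strict subset of states $\tilde{\m{\Xi}}(\m\gamma)$ is singular, so $\log\tilde{\m{\Xi}}(\m\gamma)$ must be interpreted on the selected block only, and one must justify that the $k\to\mr{N}-1$ limit commutes with the spectral (eigenvalue) operation and with the sum over states. I would discharge this by appealing to the regularity of the NL-DAE model (\cref{assump:index}, \cref{assump:bounded}) and to the multiplicative ergodic (Oseledets) theorem invoked in \cref{rmk:regular}, which guarantees the defining limits exist as genuine limits and that the exponents of the restricted deformation are well defined; once that is in place, the remaining manipulations are the routine algebra already carried out in the proof of \cref{prs:param_Lyapexp}.
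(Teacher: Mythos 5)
Your proposal follows essentially the same route as the paper's proof: substitute the parameterized deformation tensor from Proposition~\ref{prs:param_Lyapexp} into the Cauchy--Green/spectrum construction, apply $\log$ with the $\tfrac{1}{2k}$ normalization, pass to the limit, and invoke Properties $(\mr{P}1.1)$--$(\mr{P}1.2)$ to collapse the retained per-state exponents into a single $\lambda_i$ per bus. You actually supply more justification than the paper does --- explaining the origin of the factor $\tfrac{1}{2k}$ via squared singular values, arguing the two-sided inequality that upgrades the paper's bound \eqref{eq:upperbound_Lyap} to an equality for the dominant exponent, and flagging the singularity of $\tilde{\m{\Xi}}(\m\gamma)$ under strict state selection --- but these are refinements of the same argument, not a different one.
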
 
\begin{proof}\label{proof:stability-ident}
	For the proof, we consider a generator bus $i\; \in \; \mc{G}$. Let $\tilde{\m{\Xi}}(\m\gamma)$ define the deformation matrix of the power grid. From Proposition~\ref{prs:param_Lyapexp}, we can write the following
	\begin{equation}\vspace{-0.1cm}\label{proof:prop2-1-stability}
		\tilde{\m{\Xi}}(\m\gamma)
		= \sum_{i=0}^{k}\sum_{j=1}^{n}\gamma_j \left(\m{\varphi}^{i}_{0}\right)^\top \m{\varphi}^{i}_{0}, \quad \gamma_j\hspace{-0.05cm} \in\hspace{-0.05cm} \left\{ \delta_i, \omega_i,v_i \right\}_{i \; \in \; \mc{G}},			
	\end{equation}	
	where applying the $\mr{log}$ function to the above matrix we obtain the following 
	\begin{equation}\label{proof:prop2-2-stability}
		\lambda_{i_{\mc{G}}}
		= \frac{1}{2k} \mr{log} \left(\sum_{i=0}^{k}\sum_{j=1}^{n}\gamma_j \left(\m{\varphi}^{i}_{0}\right)^\top \m{\varphi}^{i}_{0}\right).
	\end{equation}
	
	Taking the limit of the $\mr{log}$ and applying properties $(\mr{P}1.1)$ and $(\mr{P}1.2)$ we obtain the LEs of the power network buses for $i\; \in \; \mc{G}$. This holds true for load/renewables buses $i \; \in \; \mc{L} \cap \mc{R}$.
\end{proof}
The above proposition establishes the stability of a node in a power grid; it is based on the type of bus and thus the states for the stability assessment are represented by the parameterization $\gamma_j$. For a generator bus $i\in\mc{G}$, $\gamma_j$ represents rotor angle, frequency and voltage stability, whereas for a load/renewables bus $i\in\mc{L} \cap \mc{R}$, $\gamma_j$ represents voltage stability. 
\begin{myrem}\label{rmk:Lyap_properties}
	We note here that for generator buses, i.e., $i \in \mc{G}$, the state variables $ \left\{ \delta_i, \; \omega_i,\; v_i \right\}_{i \; \in \; \mc{G}}$ contribute to the computation of the nodes' LEs. {These exponents are bounded by the maximum exponents} computed from each of the variables, a result of Property $(\mr{P}1.2)$. That being said, the following upper bound holds true 
	\begin{equation}\label{eq:upperbound_Lyap}
		\lambda_{i_{\mc{G}}}  \leq  \max\{\lambda_{i_{\mc{G}}} (\left\{ \delta_i\right\}_{i \in \mc{G}}),\;
		\lambda_{i_{\mc{G}}} ( \left\{ \omega_i\right\}_{i \in \mc{G}}),\;
		\lambda_{i_{\mc{G}}} (\left\{ v_i\right\}_{i \in \mc{G}})\}.
	\end{equation}
\end{myrem}

We approach quantifying stability of a NL-DAE power network from a dynamical systems' perspective without the need for a computationally demanding approach. This approach allows us to consider the joint modeling of the interactions between frequency, voltage, rotor angle, and other states of the system. {The joint modeling enables us to perform stability quantification across different time scales within a single framework. In the literature, ODE models are considered to simplify stability studies by decoupling the stability analysis of the differential and algebraic states. This decoupling does not fully capture the interdependence between different types of stability and often overlooks important interactions, particularly in systems with RERs. By considering a NL-DAE representation instead of a simplified ODE formation that decouples the algebraic constraints we can incorporate both the differential dynamics (such as those for frequency and rotor angle stability) and the algebraic constraints (such as those for voltage stability).} {It is worth noting that other stability aspects, such as inverter angle stability, can also be considered when modeling more comprehensive models with PV plants.}

Having presented the above node stability quantification method, we now rank the nodes according to their corresponding LE value. As discussed earlier, a stable state trajectory has a negative LE. As such, let $\mr{S}_i$ define the stability index of a node in a power system network. The stability index of a node is equivalent to the index of the LE from the set of ordered exponents in ascending order and can be expressed as
\begin{equation}\label{eq:stability_index}
	\mr{S}_i := \mr{index}(\lambda_i), \quad \lambda_{i} \in \tilde{\m \Lambda},
\end{equation}
where $\tilde{\m{\Lambda}} = \mr{col}\left\{\lambda_{i}\right\}_{i \in \mc{N}}$ represents a column vector of the ordered LEs $\lambda_i$ in ascending order. The next section introduces the overall stability metric that quantifies the impact of uncertain RER load perturbation on the power grid as a whole. This metric indicates the uncertainty propagation onto the power grid from an perturbed RER load injection that is being applied to a single node within the network.
\subsection{Quantifying stability against renewable perturbations and optimally allocating their locations}\label{subsec:OptRERAlloc}
The method for assessing the stability of buses within a network relies on estimating the LEs and then ranking the stability of the buses according to the values of the exponents. That is, buses with larger negative LEs are considered more stable and thus are given a higher stability index $\mr{S}_i$. The stability index that is based on the rank, under the context of renewables allocation, does not assess the impact of such renewable load injection, along with its propagation, on the overall stability of the network.  
 
With that in mind, {we approach allocating a RER by considering} a LE-based stability measure that enables studying the impact of perturbed renewable injections on the overall stability of the power network. Meaning that, we want to quantify the impact of an uncertain renewable injection on the overall stability of the power system and then choose the nodes that {least impact overall system stability}. Before posing the RER allocation problem within power networks that are described by a complete NL-DAE representation, we introduce the following stability measure that is equivalent to computing the sum of LEs along all the nodes of the power network. The LE-based measure quantifies the uncertainty that is propagated with the power grid from a perturbed load injection at a network node.
\begin{theorem}\label{theo:logdet}
	The $\mr{log\,det}$ of matrix $\tilde{\m{\Xi}}(\m\gamma)$ quantifies the overall stability of a system. As such, the $\mr{log\,det}(\tilde{\m{\Xi}}(\m\gamma))$ is said to be equivalent the Lyapunov spectrum of exponents~\eqref{eq:specLyapExpQR} according to the following relation
	\begin{equation}\label{eq:Lyap_connection}
	\mr{log\,det}(\tilde{\m{\Xi}}(\m\gamma)) \equiv \beta \sum_{i=1}^{N}\lambda_{i}, \quad \forall \; i\in \mc{N},
	\end{equation}
	where $\beta$ is a constant equivalent to $	\tfrac{1}{2\mr{N}}$. 
	The constant $\beta$ is due to the definition of LEs being computed along the system trajectory.
\end{theorem}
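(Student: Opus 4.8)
The plan is to reduce $\mr{log\,det}(\tilde{\m{\Xi}}(\m\gamma))$ to the QR-based expression~\eqref{eq:specLyapExpQR} for the Lyapunov spectrum, exploiting the fact that the determinant kills the orthogonal factor $\m{Q}_k$ and collapses the product structure of $\m{\Phi}_0^k$ into a product of diagonal entries. First I would treat the unparameterized case $\m\gamma=\m{1}$, where $\tilde{\m{\Xi}}(\m{1})=\m{\Xi}={\m{\Phi}_0^k}^{\top}\m{\Phi}_0^k$; substituting the factorization $\m{\Phi}_0^k=\m{Q}_k\m{R}_0^k$ from~\eqref{eq:QRFactorization} and using $\m{Q}_k^{\top}\m{Q}_k=\m{I}_n$ gives $\m{\Xi}={\m{R}_0^k}^{\top}\m{R}_0^k$. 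Taking log-determinants then yields $\mr{log\,det}(\m{\Xi})=2\,\mr{log}\,\abs{\mr{det}\,\m{R}_0^k}=2\sum_{j=1}^{n}\mr{log}\,\abs{(\m{R}_0^k)_{jj}}$, where the last equality uses that $\m{R}_0^k$ is upper triangular by~\eqref{eq:QR-R}, so its determinant is the product of its diagonal entries.

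Second I would bring in the QR characterization of the exponents: by~\eqref{eq:specLyapExpQR} together with the multiplicativity of the diagonal entries of the triangular product~\eqref{eq:QR-R}, the $j$-th exponent is $\lambda_j=\lim_{k\to \mr{N}-1}\tfrac{1}{2k}\mr{log}\,\abs{(\m{R}_0^k)_{jj}}^{2}=\lim_{k\to\mr{N}-1}\tfrac{1}{k}\mr{log}\,\abs{(\m{R}_0^k)_{jj}}$, the extra $\tfrac12$ matching the normalization used in~\eqref{eq:node_stable}. Summing over $j$ and exchanging the finite sum with the logarithm turns $\sum_{j=1}^{n}\lambda_j$ into $\lim_{k\to\mr{N}-1}\tfrac{1}{2k}\mr{log}\,\mr{det}\big({\m{R}_0^k}^{\top}\m{R}_0^k\big)=\lim_{k\to\mr{N}-1}\tfrac{1}{2k}\mr{log\,det}(\m{\Xi})$. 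Evaluating at the end of the simulation window $k=\mr{N}-1$ and absorbing the $\mr{O}(1/\mr{N})$ boundary correction from using $\mr{N}-1$ in place of $\mr{N}$, this is exactly the claimed equivalence with proportionality constant $\beta=\tfrac{1}{2\mr{N}}$ relating $\sum_{i=1}^{N}\lambda_i$ and $\mr{log\,det}(\tilde{\m{\Xi}})$. The parameterized case $\m\gamma\neq\m{1}$ is the same computation restricted to the states selected by $\m\gamma$, using $\m\gamma^{2}=\m\gamma$ as in Proposition~\ref{prs:param_Lyapexp}; here $\mr{log\,det}$ is read as the log of the \emph{pseudo}-determinant over the nonzero eigenvalues of $\tilde{\m{\Xi}}(\m\gamma)$, and the sum $\sum_i\lambda_i$ runs over the correspondingly selected nodes, with the per-node reduction to a single $\lambda_i$ handled by the bound~\eqref{eq:upperbound_Lyap} from Property~\ref{property:LyapProp}.

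One technical point to dispatch is that $\tilde{\m{\Xi}}(\m\gamma)$ in~\eqref{eq:paramLyap} is the \emph{accumulation} $\sum_{i=0}^{k}(\m{\varphi}_0^i)^{\top}\m{\varphi}_0^i$ rather than the single term ${\m{\Phi}_0^k}^{\top}\m{\Phi}_0^k$ used above; I would argue, via Property~\ref{property:LyapProp}~$(\mr{P}1.1)$–$(\mr{P}1.2)$ and the boundedness in Assumption~\ref{assump:bounded}, that the time-sum does not change the exponential growth rates — the largest-index summand dominates asymptotically — so the logarithmic eigenvalue rates of the accumulated matrix coincide with those of ${\m{\Phi}_0^k}^{\top}\m{\Phi}_0^k$ in the limit. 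The main obstacle is precisely this step together with justifying that $\lim_{k}\tfrac{1}{k}\mr{log}\,\abs{\mr{det}\,\m{R}_0^k}$ both exists and equals $\sum_{j}\lambda_j$: this is where regularity is essential, and I would invoke Oseledets' multiplicative ergodic theorem and the uniqueness of the QR factorization under regularity (Remark~\ref{rmk:regular}) — already available in the excerpt — to guarantee that the diagonal QR rates are exactly the Lyapunov exponents and that the forward limit is well defined, so that $\mr{det}$ factors cleanly through the spectrum.
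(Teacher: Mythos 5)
Your proof is correct, but it takes a genuinely different route from the paper's. The paper argues at the eigenvalue level: it writes $\mr{det}(\tilde{\m{\Xi}}(\m\gamma))$ as the product of the eigenvalues $\hat{\lambda}_i$ of the positive semi-definite deformation matrix, observes via~\eqref{eq:node_stable} that taking $\tfrac{1}{2k}\log$ of each eigenvalue yields a Lyapunov exponent, and concludes in two lines. You instead push everything through the discrete QR factorization~\eqref{eq:QRFactorization}: the orthogonal factor is killed by the determinant, $\mr{det}\big({\m{R}_0^k}^{\top}\m{R}_0^k\big)=\prod_j (\m{R}_0^k)_{jj}^2$ because $\m{R}_0^k$ is triangular, and the diagonal QR rates are exactly the exponents by~\eqref{eq:specLyapExpQR}. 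This buys two things the paper's proof does not have: it needs only the determinant-level identity rather than a one-to-one matching between individual eigenvalues of the Cauchy--Green tensor and individual exponents (the more delicate claim that the paper implicitly assumes), and it ties the theorem directly to the quantity actually produced by the discrete-QR computation in Algorithm~\ref{algorithm:DQR}. You also flag, and the paper's proof silently skips, the mismatch between the accumulated form $\sum_{i=0}^{k}(\m{\varphi}_0^i)^{\top}\m{\varphi}_0^i$ in~\eqref{eq:paramLyap} and the single-endpoint tensor ${\m{\Phi}_0^k}^{\top}\m{\Phi}_0^k$ of~\eqref{eq:Cauchy}; your dominant-summand argument is the right way to dispatch that. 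One caution on bookkeeping: your derivation (correctly) lands on $\sum_i \lambda_i = \tfrac{1}{2k}\,\mr{log\,det}(\tilde{\m{\Xi}})$, i.e.\ the constant $\tfrac{1}{2\mr{N}}$ multiplies the $\mr{log\,det}$ side, whereas~\eqref{eq:Lyap_connection} as printed places $\beta$ on the $\sum\lambda_i$ side --- that is an inconsistency in the theorem statement (shared by the paper's own~\eqref{eq:log-det2}), not a gap in your argument, but you should state explicitly which side $\beta$ sits on.
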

\begin{proof}\label{proof:log-det}
	From~\eqref{eq:node_stable} observe that the LEs are the eigenvalues of the $\mr{log}$ of matrix $\tilde{\m{\Xi}}(\mc{S},\m\gamma)$. This matrix is positive semi-definite, whereby its determinant can be computed as
	\begin{align} \vspace{-0.1cm}\label{eq:log-det1}
		\mr{det}(\tilde{\m{\Xi}}(\m\gamma))
		&= \left(\prod_{i=1}^{N}\hat{\lambda}_i\right),	\vspace{-0.3cm}
	\end{align}
	where $\hat{\lambda}_i$ is the $i$-$th$ eigenvalue of matrix $\tilde{\m{\Xi}}(\m\gamma))$. Now taking the $\mr{log}$ of the eigenvalues $\hat{\lambda}_i$, we obtain according to~\eqref{eq:node_stable} the $i$-$th$ LE. As such, applying $\mr{log}$ to~\eqref{eq:log-det1} we obtain the following
	\begin{align}\label{eq:log-det2}
		\mr{log\,det}(\tilde{\m{\Xi}}(\m\gamma))
		&= \beta\left(\sum_{i=1}^{N}{\lambda}_i\right),	
	\end{align}
	this concludes the proof.
\end{proof}

The above theorem presents a method for stability quantification after an uncertain perturbation is applied at a system node; it suggests that computing~\eqref{eq:Lyap_connection} is equivalent to computing the overall stability of all the nodes in a power network after a perturbed renewable load is applied to a single node. As such, this metric enables the quantification of the impact on stability, through computing the spectrum of LEs, resulting from the allocation of uncertain RER at a node within a power network.

In order to identify the set of nodes that result in minimal uncertainty propagation, i.e., {nodes that are indicative of stability under a renewable load injection,} we are required to solve a combinatorial optimization problem. With that in mind, we now pose the optimal RER allocation problem based on the overall stability impact, computed based on~\eqref{eq:Lyap_connection}, {of a renewable injection at the power network buses.} This combinatorial problem is computationally expensive; it increases in complexity for large networks. As such, we pose the above problem as a set optimization problem for which we provide a computationally efficient algorithm to solve. With that in mind, let $\mc{S}$ denote the set of RERs to be allocated within an existing power network. Let the maximum number of RERs be $N$. Note that the maximum number of RERs is limited to the number of buses within a power network. 

\begin{algorithm}[t]
	\small
	\caption{Framework for identifying optimal RER allocation within a power network}\label{alg:algorithm_allocation}
	\DontPrintSemicolon
	\textbf{input:} Network parameters, $s$, $i_{\mc{G}}$, $i_{\mc{N}}$, $\mc{V}$, $\mc{N}$ \;
	\textbf{initialize:} $({\tilde{P}}_\mr{R_i}^0,{\tilde{Q}}_\mr{R_i}^0) \leftarrow (\m{0},\m{0})$, $i \leftarrow 1$, $k \leftarrow 1$ \;
	\For{$i  = 1$ \textbf{to} $N$}{
		\textbf{assign:} $({\tilde{P}}_\mr{R_i}^0,{\tilde{Q}}_\mr{R_i}^0) \leftarrow (1+\tfrac{\beta}{100})(\mr{P}_\mr{R_i}^0,\mr{Q}_\mr{R_i}^0 )$\; 
		\For{$k = \;1 $ \textbf{to} $\mr{N}-1$}{
			\textbf{simulate:}~\eqref{eq:disc_ssm_NDAE} by implementing Algorithm~\ref{algorithm:NR}\;
			\textbf{compute:} $\tilde{\m{\Xi}}_i(\m\gamma)$ by implementing Algorithm~\ref{algorithm:DQR}\;
		}
	}
	\textbf{initialize:} $\mc{S}\leftarrow \emptyset$, $i \leftarrow 1$, ${\alpha_i} \leftarrow {0} \; \forall \; i \; \in \; \mc{N}$ \;
	\For{$i \leq s$}{
		\textbf{set:} $\mc{L}{(\mc{S})} = \mr{log\,det} \left(	\tilde{\m{\Xi}}(\mc{S},\m\gamma)\right) = \mr{log\,det} \left( \sum_{i=1}^{N} \alpha_i 
		\tilde{\m{\Xi}}_i(\m\gamma)\right)$\;
		\textbf{compute:} $\mc{G}_i = \mc{L}(\mc{S}\cup \{a\})-\mc{L}(\mc{S})$, $\forall a\in \mc{V}\setminus \mc{S}$ \;
		\textbf{assign:} $\mc{S}\leftarrow \mc{S} \cup \left\{\mathrm{arg\,max}_{a\in \mc{V}\setminus \mc{S}}\,\mc{G}_i \right\}$ \;
		\textbf{update:} $i \leftarrow i + 1$\;
	}
	\textbf{output:} $\mc{S}^{*}$\;
\end{algorithm}
\setlength{\textfloatsep}{0pt}

The optimal RER allocation problem $\mr{\textbf{P1}}$ for the NL-DAE power system~\eqref{eq:disc_NDAE-ODE} can be written as a set optimization problem by defining 
the \textit{set function} $\mc{L}{(\mc{S})}: 2^{\mc{V}}\rightarrow \mbb{R}$ as the  $\mr{log\,det}$ of {parameterized} matrix $\tilde{\m{\Xi}}(\mc{S},\m\gamma)$, where $\mc{V} := \{ i\in\mbb{N}\,\;|\,\;\;0 < i \leq N\}$.
\begin{subequations}\label{eq:Opt_RER_Alloc_Prop}
	\begin{align}
		(\mr{\textbf{P1}})\;\; \maximize_\mc{{\m{S}}} \;\;\;
			& \mc{L}{(\mc{S})}:= \mr{log\,det} \left(	\tilde{\m{\Xi}}(\mc{S},\m\gamma)\right),\label{eq:Opt_RER_Alloc_Prop_1}\\
		\subjectto \;\;\;&  \abs{\mc{S}} = s, \;\; \mc{S}\subseteq\mc{V},
			\label{eq:Opt_RER_Alloc_Prop_2}
	\end{align}
\end{subequations}
where $\tilde{\m{\Xi}}(\mc{S},\m\gamma)\; \in \; \Rn{n \times n}$ represents the state-trajectory deformation matrix resulting from allocating an uncertain renewable load injection at a node defined by set $\mc{{S}}$. As such, $\tilde{\m{\Xi}}(\mc{S},\m\gamma)$ can be defined as
\begin{equation}\label{eq:deform_mat}
	\tilde{\m{\Xi}}(\mc{S},\m\gamma) := \sum_{i=1}^{N} \alpha_i 
	\tilde{\m{\Xi}}_i(\m\gamma), \quad \alpha_i \;\in\; \{0,1\}^{\mc{S}},
\end{equation}
where $\alpha_i$ is equivalent to zero if set $\mc{S}$ has a zero at the $i$-$th$ node index, and is one otherwise. This means that $\m{\alpha} := \{\alpha_i\}_{i\in\mc{N}}$ represents the allocation of RERs within the power network, where a value of one is given to a node with an RER injection. Matrix $\tilde{\m{\Xi}}_i(\m\gamma)$ denotes the deformation of state trajectories when an RER load injection is applied to a single node of index $i \in \mc{N}$. 
We note that based on Theorem~\ref{theo:logdet}, the optimal allocation problem $\mr{\textbf{P1}}$ is equivalent to computing the sum of the LEs across all the nodes after the application of a renewable load injection at a single perturbed node. As such, at each iteration of the allocation problem $\mr{\textbf{P1}}$, we solve the maximum value for $\mc{L}(\mc{S})$ by choosing the node that has the largest sum of LEs. This allocation problem will result in an optimal set $\mc{S}^{*}$ representing the ranked nodes from most critical to most stable. This is due to the fact that we require at each node to have a negative LE. The solution provides the operator the nodes that are indicative of stability; it informs the operator about which nodes in the system, when allocated a perturbed renewable load, lead to an unstable or stable system. Accordingly, to obtain the most stable node, we set $|\mc{S}| = N$ and choose the most stable node to be the last index of set $\mc{S}^{*}$.



Notice that the optimal problem $\mr{\textbf{P1}}$ is a combinatorial set optimization problem, and thus it is computationally expensive. To avoid the complexity of utilizing global solvers for solving the combinatorial problem, we exploit the \textit{submodular}\footnote{ A set function $\mc{L}: 2^{\mc{V}}\rightarrow \mbb{R}$ is said to be submodular if and only if for any $\mc{A},\mc{B}\subseteq\mc{V}$ with $\mc{A}\subseteq\mc{B}$, it holds true that $\mc{L}(\mc{A}\cup\{s\}) - \mc{L}(\mc{A})\geq 	\mc{L}(\mc{B}\cup\{s\}) -  \mc{L}(\mc{B})$.} and monotone increasing\footnote{A set function $\mc{L} : 2^V \rightarrow \mathbb{R}$ is monotone increasing  if $\forall \; A,B \subseteq V$, it holds true that $A \subseteq B \rightarrow \mc{L}(A) \leq \mc{L}(B)$} properties of the $\mr{log\,det}$ set function that allows for solving $\mr{\textbf{P1}}$ using \textit{greedy} algorithms. The following proposition establishes the submodularity of $\mr{log\,det}$ set function. 

\begin{myprs}\label{prs:Trace-Logdet_prop}
The $\mr{log\,det} $ of~\eqref{eq:deform_mat} denoted by set function $\mc{L}(\mc{S}):2^{\mc{V}}\rightarrow\hspace{-0.05cm}\mbb{R}$ as follows
\begin{equation}\label{eq:logdet_submodular} 
		\mc{L}(\mc{S}) =\mr{log\,det} \left(	\tilde{\m{\Xi}}(\mc{S},\m\gamma)\right),
\end{equation} 
is, for $\mc{S}\subseteq\mc{V}$, submodular and monotone increasing.
\end{myprs}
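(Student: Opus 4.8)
The plan is to prove both properties by reducing $\mc{L}$ to the canonical ``$\mr{log\,det}$ of a sum of positive semidefinite matrices'' set function and then invoking standard matrix‑analytic monotonicity facts. \textbf{Step 0 (setup and reduction).} By Proposition~\ref{prs:param_Lyapexp}, each $\tilde{\m{\Xi}}_i(\m\gamma)=\sum_{j}\gamma_j\sum_{\ell=0}^{k}(\m{\varphi}^{\ell}_{0})^{\top}\m{\varphi}^{\ell}_{0}$ is a nonnegative combination of Gram matrices, hence $\tilde{\m{\Xi}}_i(\m\gamma)\succeq 0$ for every $i$. Writing $\alpha_i=1$ iff $i\in\mc S$, the map $\mc S\mapsto\tilde{\m{\Xi}}(\mc S,\m\gamma)=\sum_{i\in\mc S}\tilde{\m{\Xi}}_i(\m\gamma)$ is therefore Löwner‑monotone: $\mc A\subseteq\mc B\Rightarrow\tilde{\m{\Xi}}(\mc A,\m\gamma)\preceq\tilde{\m{\Xi}}(\mc B,\m\gamma)$. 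To make $\mr{log\,det}$ finite on all of $2^{\mc V}$ I introduce a fixed positive‑definite base term $\m{M}_0\succ 0$ (the deformation of the unperturbed trajectory, or a regularizer $\varepsilon\m{I}_n$); it will be clear this does not affect either property. It thus suffices to prove the generic statement: $f(\mc S):=\mr{log\,det}\big(\m{M}_0+\sum_{i\in\mc S}\m{M}_i\big)$, with $\m{M}_0\succ 0$ and $\m{M}_i\succeq 0$, is monotone increasing and submodular.

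\textbf{Step 1 (marginal‑gain formula and monotonicity).} For $\m{X}\succ 0$ and $\m{M}\succeq 0$ I would use the identity
\begin{equation*}
\mr{log\,det}(\m{X}+\m{M})-\mr{log\,det}(\m{X}) = \int_{0}^{1}\trace\!\big((\m{X}+t\m{M})^{-1}\m{M}\big)\,dt,
\end{equation*}
which follows from $\tfrac{d}{dt}\mr{log\,det}(\m{X}+t\m{M})=\trace((\m{X}+t\m{M})^{-1}\m{M})$ and the fundamental theorem of calculus. Applied with $\m{X}=\m{M}_0+\sum_{i\in\mc S}\m{M}_i\succ 0$ and $\m{M}=\m{M}_s$, the integrand is the trace of a product of two positive semidefinite matrices and hence nonnegative, so $f(\mc S\cup\{s\})-f(\mc S)\ge 0$; iterating over the added elements gives $\mc A\subseteq\mc B\Rightarrow f(\mc A)\le f(\mc B)$, i.e. \emph{monotone increasing}.

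\textbf{Step 2 (submodularity).} Let $\mc A\subseteq\mc B\subseteq\mc V$ and $s\in\mc V\setminus\mc B$, and put $\m{X}_{\mc A}=\m{M}_0+\sum_{i\in\mc A}\m{M}_i$, $\m{X}_{\mc B}=\m{M}_0+\sum_{i\in\mc B}\m{M}_i$, so $0\prec\m{X}_{\mc A}\preceq\m{X}_{\mc B}$. For every $t\in[0,1]$, $\m{X}_{\mc A}+t\m{M}_s\preceq\m{X}_{\mc B}+t\m{M}_s$, and by antitonicity of the matrix inverse on the positive‑definite cone $(\m{X}_{\mc A}+t\m{M}_s)^{-1}\succeq(\m{X}_{\mc B}+t\m{M}_s)^{-1}$; multiplying by $\m{M}_s\succeq 0$ and taking traces (trace is order‑preserving against a fixed positive semidefinite matrix) gives $\trace((\m{X}_{\mc A}+t\m{M}_s)^{-1}\m{M}_s)\ge\trace((\m{X}_{\mc B}+t\m{M}_s)^{-1}\m{M}_s)$. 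Integrating over $t\in[0,1]$ and using Step~1 yields $f(\mc A\cup\{s\})-f(\mc A)\ge f(\mc B\cup\{s\})-f(\mc B)$, the diminishing‑returns inequality; when $s\in\mc B$ this is trivial from Step~1, so it holds in the generality of the footnote. Undoing the reduction of Step~0 gives the claim for $\mc L(\mc S)=\mr{log\,det}(\tilde{\m{\Xi}}(\mc S,\m\gamma))$.

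\textbf{Main obstacle.} None of the inequalities is deep; the genuine care points are (i) \emph{well‑definedness} — $\tilde{\m{\Xi}}(\mc S,\m\gamma)$ is only guaranteed PSD (and is $\m 0$ at $\mc S=\emptyset$), so $\mr{log\,det}$ may be $-\infty$ without the fixed PD base/regularizer $\m{M}_0$, and one must argue (as above, the proof only used $\m{M}_0\succ 0$) that this addition preserves submodularity and monotonicity; and (ii) justifying the two order facts in Step~2 — antitonicity of $\m{X}\mapsto\m{X}^{-1}$ and monotonicity of $\m{X}\mapsto\trace(\m{X}\m{M}_s)$ on the positive‑definite cone — which are standard in matrix analysis but should be cited or dispatched in one line via simultaneous diagonalization. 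An alternative to Steps~1--2 that avoids the integral is the matrix‑determinant‑lemma route, $f(\mc S\cup\{s\})-f(\mc S)=\mr{log\,det}\big(\m{I}_n+\m{X}_{\mc S}^{-1/2}\m{M}_s\m{X}_{\mc S}^{-1/2}\big)$, combined with monotonicity of $\mr{log\,det}$ and $\m{X}_{\mc A}^{-1}\succeq\m{X}_{\mc B}^{-1}$; either route closes the proof.
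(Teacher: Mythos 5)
Your proof is correct and rests on the same two ingredients as the paper's own argument: the derivative identity $\tfrac{d}{dt}\mr{log\,det}(\m{X}+t\m{M})=\trace\big((\m{X}+t\m{M})^{-1}\m{M}\big)$ and the antitonicity of the matrix inverse on the positive-definite cone. The only structural difference is the direction of interpolation: the paper fixes the added element $s$ and differentiates the marginal gain $\mc{L}_s$ along the segment from $\tilde{\m{\Xi}}(\mc{A})$ to $\tilde{\m{\Xi}}(\mc{B})$, showing that derivative is nonpositive, whereas you fix the base set and integrate along the segment from $\m{X}$ to $\m{X}+\m{M}_s$, then compare the integrands at $\mc{A}$ and $\mc{B}$; these are two parameterizations of the same two-variable monotonicity and buy essentially the same thing. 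Where your write-up is genuinely more careful than the paper's: (i) you flag that $\tilde{\m{\Xi}}(\mc{S},\m\gamma)$ is only guaranteed positive semidefinite (and is the zero matrix at $\mc{S}=\emptyset$), so every inverse and every $\mr{log\,det}$ in the argument tacitly requires a positive-definite base term $\m{M}_0$ or a restriction to nonempty sets where the sum is nonsingular — the paper's proof uses $\tilde{\m{\Xi}}(\m{\alpha})^{-1}$ without addressing this; and (ii) you actually prove the monotone-increasing claim (Step 1), which the statement asserts but the paper's proof never establishes separately (it only shows the derived marginal-gain function is decreasing, i.e.\ submodularity). Both additions are needed for a complete proof of the proposition as stated, so keep them.
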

\begin{proof}\label{proof:submodular-log-det2}
	Let $\mc{L}_s: 2^{V-\{s\}} \rightarrow \mathbb{R}$ denote a derived set function is defined as follows
	\begin{align*}
		\mc{L}_s(\mc{S}) & =\mr{log}\,\mr{det} \left(\tilde{\m{\Xi}}\left({\mc{S} \cup\{s\}}\right)\right)-\mr{log}\,\mr{det}\left(\tilde{\m{\Xi}}(\mc{S})\right),\\
		&=\mr{log}\,\mr{det}\left(\tilde{\m{\Xi}}(\mc{S})+\tilde{\m{\Xi}}(\{s\})\right)-\mr{log}\,\mr{det}\left(\tilde{\m{\Xi}}(\mc{S})\right).
	\end{align*}
	We first show $\mc{L}(\mc{S})$ that is monotone decreasing for any $s \in V$. Let $\mc{A} \subseteq \mc{B}\subseteq \mc{V}-\{s\}$, and let $\tilde{\m{\Xi}}(\m{\alpha}) =\tilde{\m{\Xi}}(\mc{A}) +\m{\alpha}\left(\tilde{\m{\Xi}}(\mc{B})-\tilde{\m{\Xi}}(\mc{A})\right)$ for $\m{\alpha}\in[0,1]$. Then for 
	\begin{align*}
		\tilde{\mc{L}_s}(\tilde{\m{\Xi}}(\m{\alpha}))=\mr{log}\,\mr{det}\left(\tilde{\m{\Xi}}(\m{\alpha})+\tilde{\m{\Xi}}(\mc{S})\right)-\mr{log}\,\mr{det}\left(\tilde{\m{\Xi}}(\m{\alpha})\right),
	\end{align*}
	we obtain the following
		\begin{align*}
				\frac{\mr{d}}{\mr{d} \m{\alpha}} &\tilde{\mc{L}_s}(\tilde{\m{\Xi}}(\m{\alpha}))\\
				&\quad= 
				\mr{trace}\Big[
				\left(\left(\tilde{\m{\Xi}}(\m{\alpha})+\tilde{\m{\Xi}}(\mc{S})\right)^{-1}-\;\;\tilde{\m{\Xi}}(\m{\alpha})^{-1}\right)\\
				&\quad\quad\quad\quad\quad
				\left(\tilde{\m{\Xi}}(\mc{B})-\tilde{\m{\Xi}}(\mc{A})\right) \Big] \leq 0.
			\end{align*}
	Such that $$	\left(\hspace{-0.1cm}\left(\tilde{\m{\Xi}}(\m{\alpha})+\tilde{\m{\Xi}}(\mc{S})\right)^{-1}-\;\;\tilde{\m{\Xi}}(\m{\alpha})^{-1}\right)^{-1} \hspace{-0.2cm}\preceq 0,$$ 
	and $\left(\tilde{\m{\Xi}}(\mc{B})-\tilde{\m{\Xi}}(\mc{A})\right)\succeq 0,$ then the above inequality holds. Thus, we have $\mc{L}_s$ is monotone decreasing, and $\mc{L}(\mc{S})$ is submodular. 
	On such note, the above proposition holds true. 
\end{proof}
 The aforementioned proposition establishes the submodularity of the optimal allocation problem $\mr{\textbf{P1}}$, this allows the use of greedy algorithms to solve the computationally exhaustive problem in a more efficient method. The greedy algorithm~\cite[Algorithm 1]{Kazma2023f} is utilized to solve the aforementioned submodular optimization problem. The optimality guarantee of the greedy algorithm for submodular set optimization is detailed in Appendix~\ref{apndx:greedy}. We note that this optimality guarantee reaches $99\%$ optimality when solving an allocation problem that has a submodular objective function. 
 
 \begin{figure*}[t]
 	\vspace{-0.3cm}	\centering
 	\subfloat{\includegraphics[keepaspectratio=true,scale=0.52]{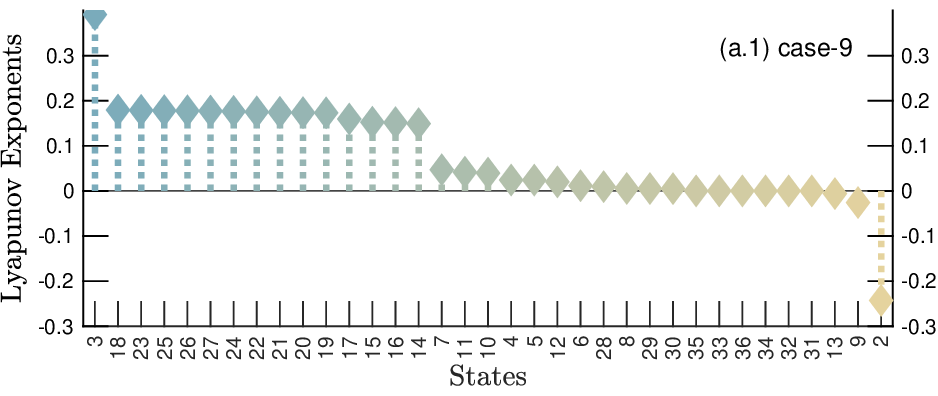}}\hspace{0.35cm}
 	\vspace{-0.35cm}
 	\subfloat{\includegraphics[keepaspectratio=true,scale=0.52]{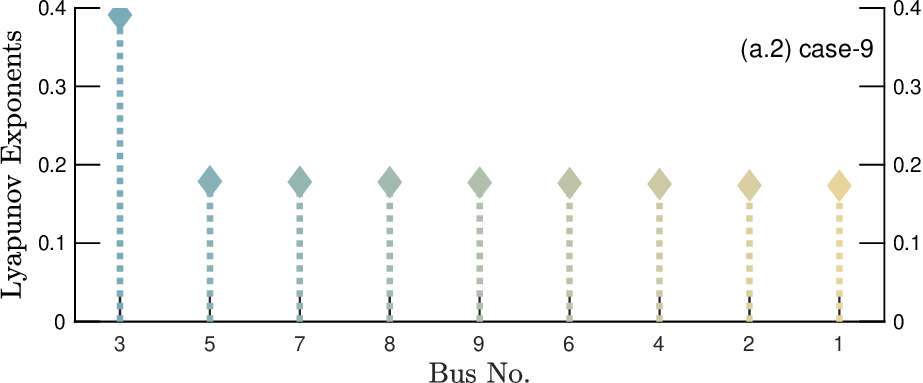}}{}{}
 	\subfloat{\includegraphics[keepaspectratio=true,scale=0.52]{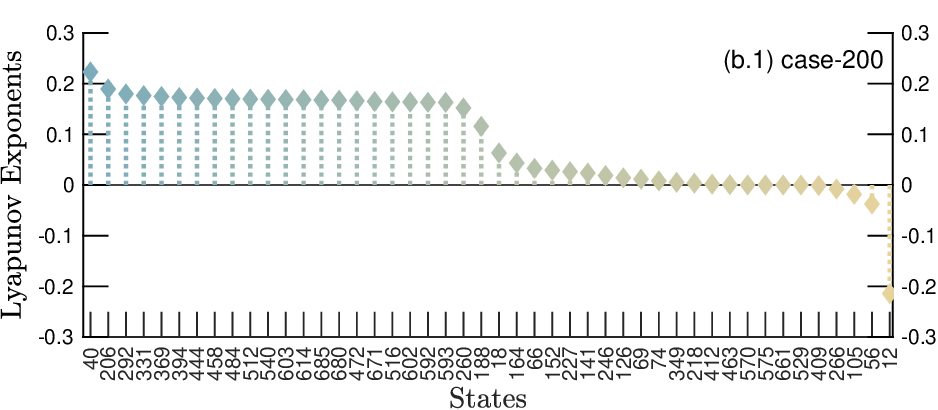}}\hspace{0.35cm}
 	\subfloat{\includegraphics[keepaspectratio=true,scale=0.52]{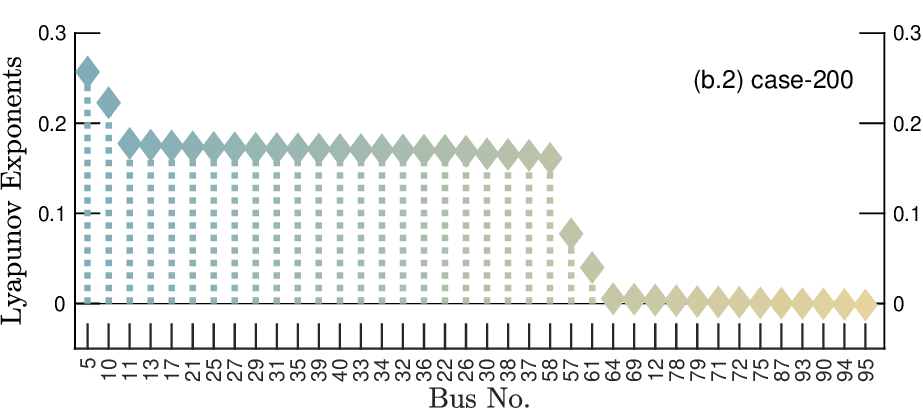}}{}{}
 	\vspace{-0.1cm}
 	\centering 
 	\caption{Lyapunov spectrum of exponents for (a.1 and a.2) $\mr{case}$-$\mr{9}$ (9-buses) and (b.1 and b.2) $\mr{case}$-$\mr{200}$ (200-buses). Column one depicts the spectrum of LEs computed for all the system states. Column two depicts the LEs for each bus within the network, i.e., the stability index.}\label{fig:LyaExp-9}
 	\vspace{-0.6cm}
 \end{figure*}
 
The proposed problem $\mr{\textbf{P1}}$ can be solved efficiently according to the greedy algorithm. The framework for identifying optimal stable nodes and uncertain renewable injection locations is outlined in Algorithm~\ref{alg:algorithm_allocation} which embeds the greedy algorithm to obtain $\mc{S}^*$.
\section{Numerical case study}\label{sec:casestudy}
In this section, we demonstrate the proposed stable node identification and RER allocation framework on standard power systems. The goal here is to identify stable nodes that allow for attenuation of the uncertain and intermittent  loads from RER on the overall stability of a power network. With that in mind, we attempt to answer the following questions.
\begin{itemize}[leftmargin=*]
	\item $(\mathrm{Q}1)$ Does the stability index $\mr{S}_i$ depict different forms of system instability (frequency, voltage and rotor angle) in power systems?
	\item $(\mathrm{Q}2)$ Does solving the optimal RER allocation framework result in optimal stable nodes that offer the least overall stability implication on the power network?
	\item $(\mathrm{Q}3)$ Is the clearing time after inducing a disturbance from an RER load at an optimal node shorter than disturbances added at other nodes? This verifies the optimality of the allocation problem.
	\item $(\mathrm{Q}4)$ Is the overall stability measure numerically equivalent to the summation of LEs (see Theorem~\ref{theo:logdet})?
	\item $(\mathrm{Q}5)$ Can the stable node identification framework inform system operators on where to allocate RERs into the system?
\end{itemize}
\subsection{Implementation of the stability and allocation framework}\label{sec:OPP_Frame}
As mentioned in Section~\ref{subsec:NDAE}, we consider two power system models detailed in Appendix~\ref{apndx:power_model} and Appendix~\ref{apndx:power_model2}. That being said, we attempt to answer $\mathrm{Q}1$-$\mathrm{Q}4$ for the $4^{\text{th}}$-order power system model first, and then do the same for the $9^{\text{th}}$-order PV-integrated power grid model in Section~\ref{subsec:RER_allocation}. We first consider three standard power systems of contrasting size for the assessment of the proposed framework: (i) $\mr{case}$-$\mr{9}$: A $\mr{9}$-bus power system with $\mr{3}$ synchronous generators---Western System Coordinating Council (WSCC), (ii) $\mr{case}$-$\mr{39}$: A 39-bus network with 10 synchronous generators---New-England Power System, and (iii) $\mr{case}$-$\mr{200}$: A $\mr{200}$-bus power system with $\mr{49}$ synchronous generators---ACTIVSg200-Bus network “Illinois200 case.” The simulations and optimization problem are performed in MATLAB R2021b running on a MacBook Pro with an Apple M1 Pro chip, a 10-core CPU, and 16 GB of RAM.
 
The generator parameters are extracted from power system toolbox (PST)~\cite{Sauer2017} case file $\mr{data3m9b.m}$ for $\mr{case}$-$\mr{9}$, and $\mr{datane.m}$ for $\mr{case}$-$\mr{39}$. For $\mr{case}$-$\mr{200}$ the generator parameters are chosen based on the ranges provided in the PST toolbox. Regulation and chest time constants for the generators are chosen as $R_{\mr{D}i} = 0.2 \; \mr{Hz/pu}$ and $T_{\mr{CH}i} = 0.2 \; \mr{sec}$. The synchronous speed is set to $\omega_{0} = 120\pi \;  \mr{rad/sec}$ and a power base of $100 \; \mr{MVA}$ is considered for the power system. The steady-state initial conditions for the power system are generated from solutions of the power flow obtained from MATPOWER $\mr{runpf}$ function. 

The discretization constant for the step-size is $h = 0.1$ and simulation time-span $t = 30 \; \mr{sec}$. We simulate a renewable load disturbance by introducing at  $t > 0$ a RER load denoted as $(\mr{P}_\mr{R}^0,\mr{Q}_\mr{R}^0)$ using the NR method as described in Appendix~\ref{apndx:NR-method}. In the scope of this work, RERs are modeled as negative loads that are injected into the power network at a node in set $\mc{N}$. Uncertainty of an RER load is modeled by adding a load perturbation of magnitude $\beta$. The perturbed renewables load at a node is then computed as $(\mr{\tilde{P}}_\mr{R}^0,\mr{\tilde{Q}}_\mr{R}^0) = (1+\tfrac{\beta}{100})(\mr{P}_\mr{R}^0,\mr{Q}_\mr{R}^0 )$. To account for the uncertainty of the renewable load, we vary $\beta$ between $\{2\%, 20\%\}$. To verify accuracy of the proposed NL-DAE transformation readers are referred to~\cite{Kazma2023b}. Note that the power generator's~\eqref{eq:differential_dynamics} response to the transients induced by the uncertain renewable injections is automatically regulated by the governor response $\m{T}_\mr{r}$. 

\begin{figure*}[t]
	\centering
	\subfloat{\includegraphics[keepaspectratio=true,scale=0.6]{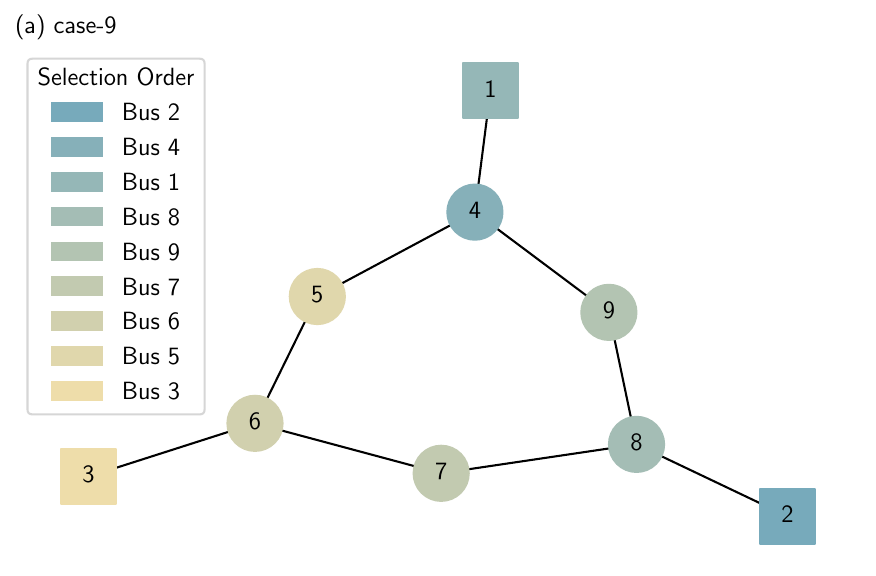}}
	\subfloat{\includegraphics[keepaspectratio=true,scale=0.378]{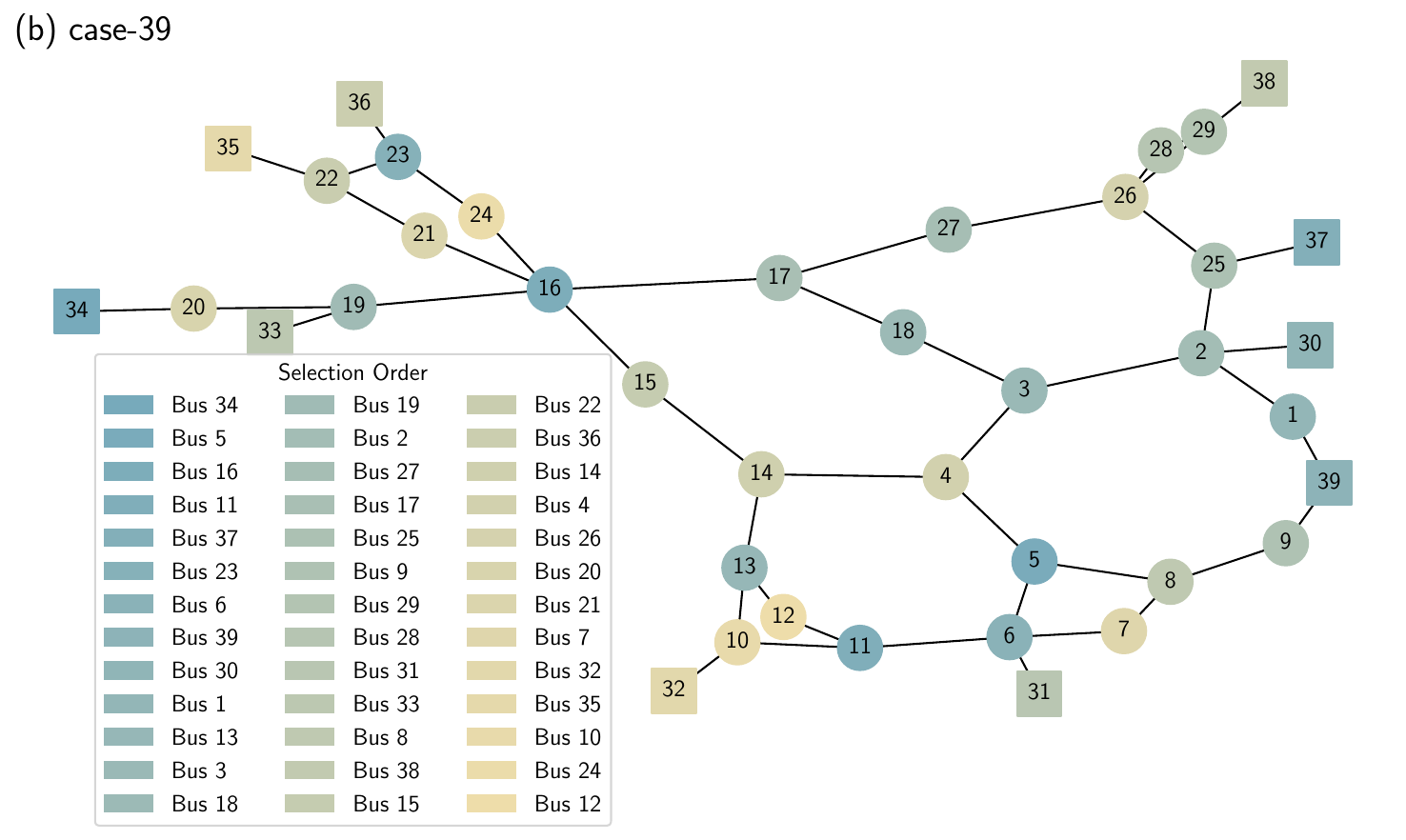}}{}{} \hspace{-4cm}
	\subfloat{\includegraphics[keepaspectratio=true,scale=0.36]{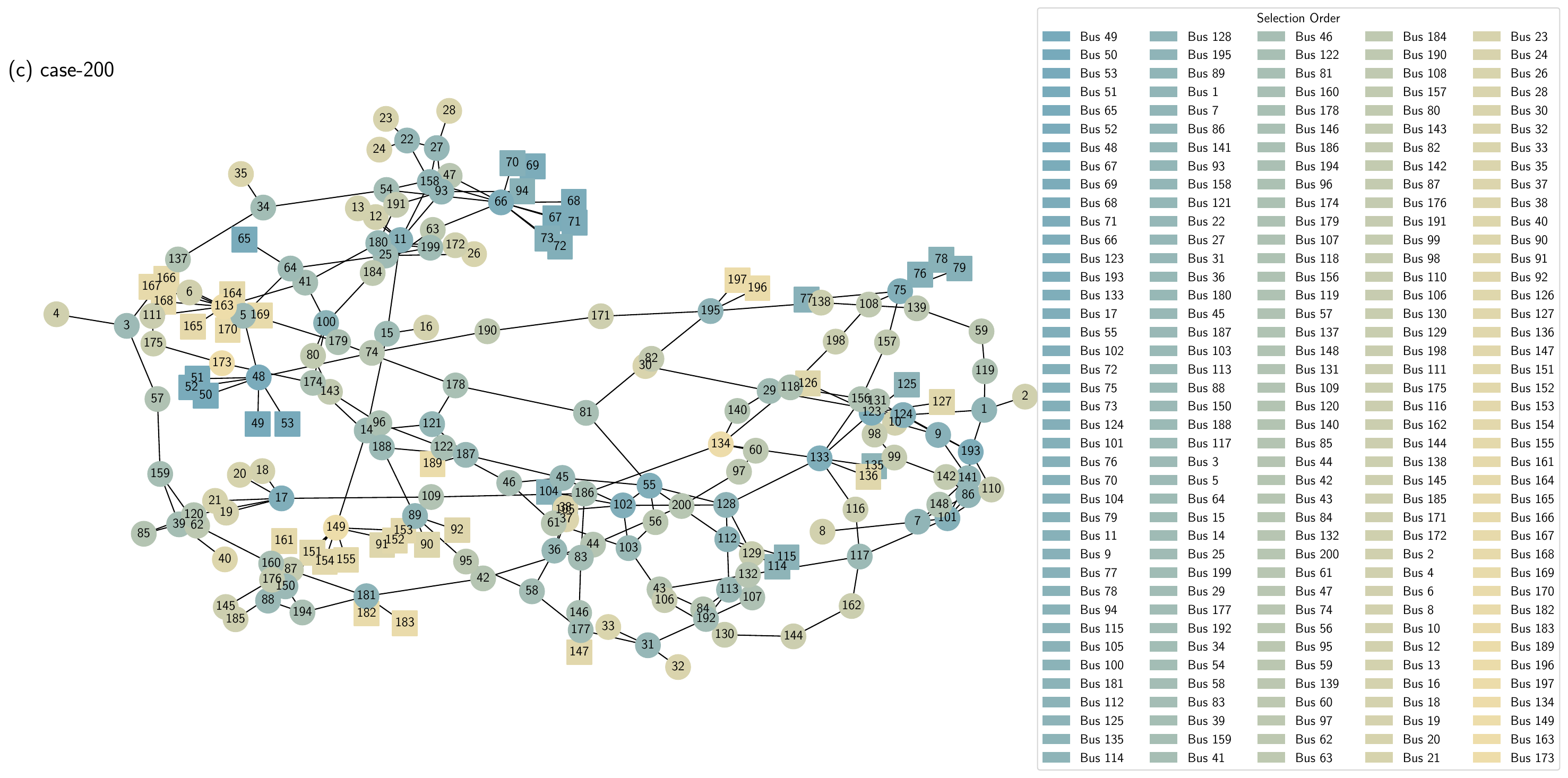}}\vspace{-0.2cm}
	\caption{Nodes from most to least stable obtained by solving $\mr{\textbf{P1}}$ for networks (a) $\mr{case}$-$\mr{9}$, (b) $\mr{case}$-$\mr{39}$ and (c) $\mr{case}$-$\mr{200}$. The square nodes represent generator buses {while the circles represent loads/renewables buses.}}\label{fig:AllocCase}
	\vspace{-0.7cm}
\end{figure*}
\setlength{\textfloatsep}{0pt}

To quantify the overall stability impact on a power network after a renewable injection is allocated at a bus (whether a generator or load bus; see Proposition~\ref{prs:stability-ident}), we solve the optimal problem $\mr{\textbf{P1}}$. The  horizon window for LEs computations is chosen as $\mr{N} = t/h= 300.$ First, we quantify the {parameterized} deformation matrix $\tilde{\m{\Xi}}_i(\m\gamma)$ (see, Proposition~\ref{property:LyapProp}) for each $i \in \mc{N}$. Here, a perturbed renewable load injection along with its uncertainty is applied to a node within the network as described above. To obtain accurate computations of $\tilde{\m{\Xi}}_i(\m\gamma)$ (see, Section~\ref{subsec:LyapTheory}), we utilize Algorithm~\ref{algorithm:DQR}. Then, we utilize Theorem~\ref{theo:logdet} to compute the objective function of $\mr{\textbf{P1}}$, such that the $\mr{log\,det} \left(	\tilde{\m{\Xi}}(\mc{S},\m\gamma)\right)$ is equivalent to computing the sum of LEs of the system after a renewable injection is applied at a node $i \in \mc{N}$. Based on whether a node is a generator or load bus, we obtain the LEs be referring to Proposition~\ref{prs:stability-ident}. The greedy algorithm is then adopted to solve $\mr{\textbf{P1}}$ for $|\mc{S}|=s$. Here we want to quantify the impact of stability when a renewable load is applied at each of the network buses, as such we set $s=N$. The solution to the optimal allocation problem, denoted by $\mr{\mathbf{P1}}$, is the set $\mathcal{S}^*$, where $\mathcal{S}^*$ represents an ordered set of nodes. The nodes are ranked based on their ability to minimize uncertainty propagation in the system {when uncertain renewable loads are allocated at the nodes.} This detailed framework is summarized in Algorithm~\ref{alg:algorithm_allocation}.
\subsection{Stability index and node ranking for potential RER allocation}\label{subsec:LyapExp_Num}
As mentioned in the aforementioned sections, the validity of the NL-DAE model and the discretization is studied in~\cite{Kazma2023b}. With that in mind, we now want to analyze the stability of the nodes through quantifying the LEs of the network that already has renewable load injections allocated at random buses. The LEs for $\mr{case}$-$\mr{9}$ and $\mr{case}$-$\mr{200}$ computed using~\eqref{eq:specLyapExpQR}, considering all the differential and algebraic states, are depicted in left column of Fig.~\ref{fig:LyaExp-9}. The states of the systems are summarized in~\eqref{diffstate}--\eqref{algstate}. For brevity, the LEs for $\mr{case}$-$\mr{39}$ are omitted. Notice that the LEs corresponding to each state range from positive LEs to negative LEs indicating that the system is inundated with transients from the allocated renewables. The LEs that characterize a bus stability are then computed for the buses of the two systems, according to Proposition~\ref{prs:stability-ident}, by computing the parameterized state-deformation matrix; see, Proposition~\ref{prs:param_Lyapexp}. The corresponding LEs are computed and then ordered to obtain the stability index of each bus in the system denoted as $\mr{S}_i$ for $i\in\mc{N}$. The right column of Fig.~\ref{fig:LyaExp-9} depicts the ordered LEs that are equivalent to the stability index as discussed in Section~\ref{subsec:StableNodeIdent}. {For $\mr{case}$-$\mr{9}$, buses $\{1, 2, 3\}$ are generator buses; their stability index depends on frequency, voltage or rotor angle.} This is a consequence of Property~\ref{property:LyapProp} that is discussed in Remark~\ref{rmk:Lyap_properties}. {Similarly, for $\mr{case}$-$\mr{200}$, the stability quantification of the buses can result from instability related to frequency, voltage or rotor angle as depicted in Fig.~\ref{fig:LyaExp-9}.} These answers question $\mathrm{Q}1$ posed at the beginning of this section.

 \begin{figure}[t]
	\centering
	\vspace{-0.2cm}
	\hspace{-0.3cm}
	\subfloat{\includegraphics[keepaspectratio=true,scale=0.39]{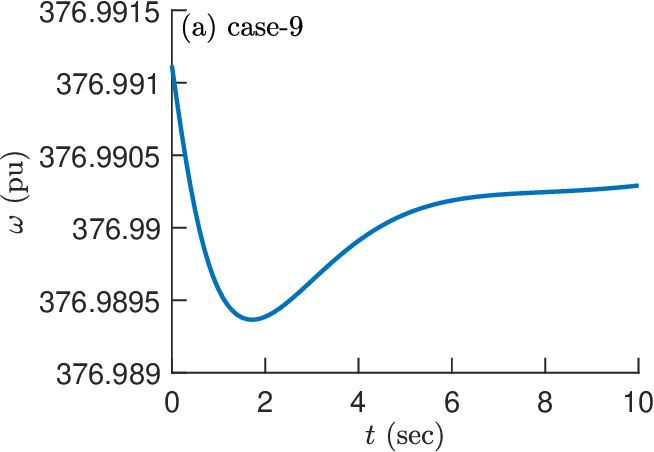}}\vspace*{-0.35cm} \hspace{-0.1cm}
	\subfloat{\includegraphics[keepaspectratio=true,scale=0.39]{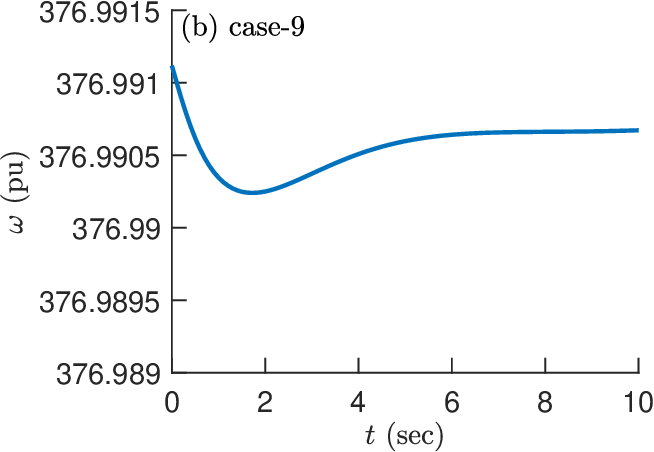}}{}{}
	\hspace{-1cm}
	\subfloat{\includegraphics[keepaspectratio=true,scale=0.39]{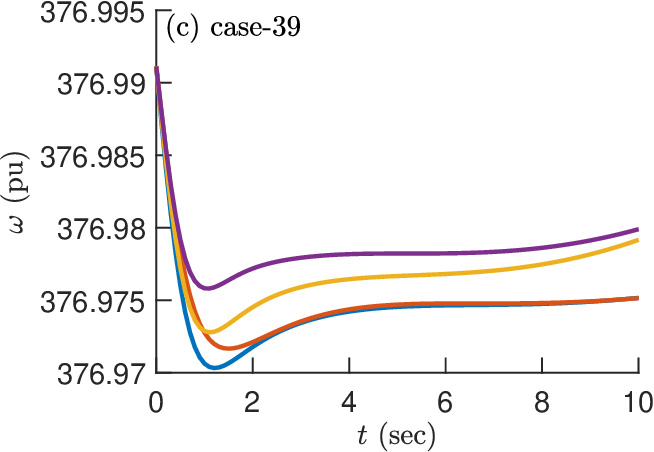}}\vspace*{-0.35cm} \hspace{-0.1cm}
	\subfloat{\includegraphics[keepaspectratio=true,scale=0.39]{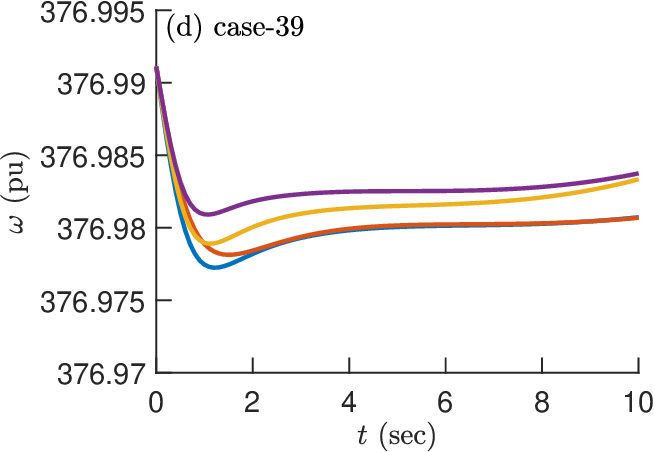}}{}{}
	\subfloat{\includegraphics[keepaspectratio=true,scale=0.39]{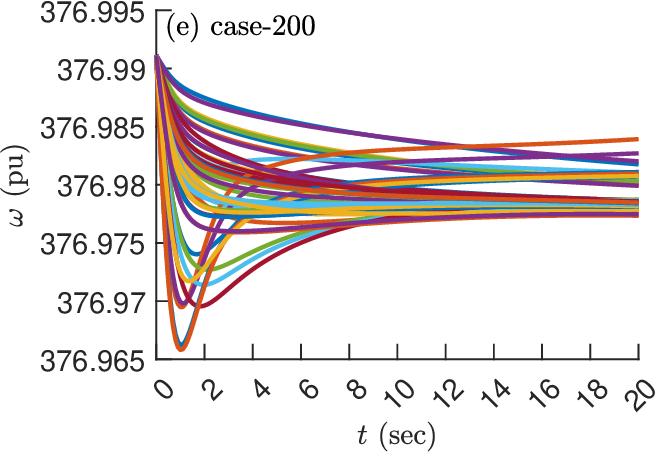}}
	\subfloat{\includegraphics[keepaspectratio=true,scale=0.39]{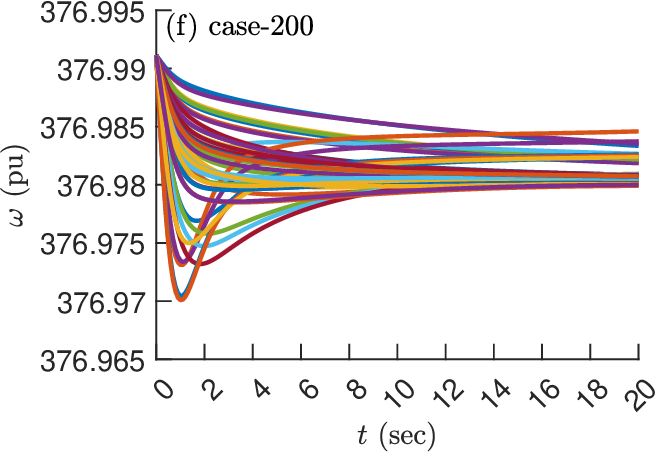}}{}{}
	\vspace{-0.2cm}
	\caption{Clearing time for frequency transients induced by allocating an uncertain renewable load injection at a random bus (a, c, e) and at the most stable bus (b, d, f), computed by solving $\mr{\textbf{P1}}$ for each case system.}\label{fig:response}
\end{figure}

Based on the validity of stability quantification method resulting from the computation of the system's spectrum of LEs, we now solve $\mr{\textbf{P1}}$. The optimal uncertain renewable load allocation problem is solved according to Algorithm~\ref{alg:algorithm_allocation} as discussed in Section~\ref{sec:OPP_Frame}. The ranked nodes (buses) denoted by $\mc{S}^*$ are depicted in Fig.~\ref{fig:AllocCase}. The buses are ranked from most stable to least stable, thereby indicating which nodes allow for allocating a renewable load while having the least impact on the overall stability of the power network. {For $\mr{case}$-$\mr{9}$, the optimal bus location is a generator, Bus No. $2$. This is followed by a load, Bus No. $4$, and a generator, Bus No. $1$.} Whereas by referring to Fig.~\ref{fig:LyaExp-9}, notice that the top three stable nodes, in order, are $\{1,2,4\}$. This is due to the following: when computing the stability index, we quantify the stability of a node based on the LEs resulting from frequency, voltage, or rotor angle instability. However, for the allocation problem when solving  $\mr{log\,det} \left(	\tilde{\m{\Xi}}(\mc{S},\m\gamma)\right)$, we sum all the LEs of all the buses within the network. That is, we are quantifying the impact of a renewable load injection on the stability of all the nodes by computing all the LEs of the system after the uncertain load is allocated to the power grid. As such, the stability index of the buses can be different than the stable nodes identified by solving problem $\mr{\textbf{P1}}$. Note that the plotted networks are based on branch data extracted from MATPOWER case files for each of the studied networks.

 For $\mr{case}$-$\mr{39}$, the most stable generator nodes are buses $\{34,37,39\}$, and the most stable load nodes are buses $\{5,16,11\}$, as depicted in Fig.~\ref{fig:AllocCase}. Notice that for the load buses, the three buses are included in 3 different community structures within the power network. {Here, we refer to a sub-network as a community structure, where several nodes are connected and form a loop connecting to a neighboring community.} As mentioned in the introduction, in this paper we approach quantifying the impact of a node on overall system stability from a dynamical perspective. In the literature, certain edge and node centrality measures can be used to assess the impact of loads at specific nodes, however, such approaches are graph-theoretic. A possible extension to the work related to this study, is to integrate such centrality measures with the developed stability measure herein. The results for $\mr{case}$-$\mr{200}$ are also depicted in Fig.~\ref{fig:AllocCase}. The full ranking of the power network buses is shown from most stable to least stable. The identification shows that certain sub-networks are less stable than others, and thus the allocation of RERs is preferable within those communities.
  
 
 Now, in order to validate the optimality of the ranked nodes $\mc{S}^*$ for each of the test systems, we simulate the dynamics by allocating an uncertain RER at the most stable node and compare to a system dynamics scenario that has a random RER allocation. The frequency response of the generators for $\mr{case}$-$\mr{9}$, $\mr{case}$-$\mr{39}$ and $\mr{case}$-$\mr{200}$ are shown in Fig.~\ref{fig:response}. Notice that the transients clear faster for the case when the same renewable load injections are applied to the optimal stable node (right column) for both test systems. The clearing time for the transients for $\mr{case}$-$\mr{9}$ is $0.5$ seconds faster and has a damped transient as compared to the random node. {For $\mr{case}$-$\mr{39}$ and $\mr{case}$-$\mr{200}$, the transient responses are also damped and are around $1$ second shorter.} The resulting faster clearing time indicates that uncertainty from renewable load injections can propagate slower depending on where the RER is allocated. This also demonstrates the optimality of the stability result obtained from solving $\mr{\textbf{P1}}$, thereby answering questions $\mathrm{Q}2$  and $\mathrm{Q}3$.
 
 Finally, we numerically verify Theorem~\ref{theo:logdet}. In doing so, we evaluate the $\mr{log\,det}$ of matrix $\tilde{\m{\Xi}}_i(\m\gamma)$ and compare it to the sum of the spectrum of LEs computed using~\eqref{eq:specLyapExpQR}. For $\mr{case}$-$\mr{9}$ and $\mr{case}$-$\mr{200}$ we obtain an equivalence relation as presented in Theorem~\ref{theo:logdet}. For $\mr{case}$-$\mr{9}$ the sum of all LEs is equal to $0.2349$ and for $\mr{case}$-$\mr{200}$ the sum of LEs is $2.6334$. The results thus provide a clear relation between the stability measure used in the optimal renewable load allocation problem $\mr{\textbf{P1}}$ and the LEs exponents that quantify system stability. The equivalence, therefore, answers question $\mathrm{Q}4$.
 \subsection{PV and load-integrated power system}\label{subsec:RER_allocation}
In this section, we consider an NL-DAE power system that details the dynamic models of synchronous generators, grid-forming PV plants, motor loads, and dynamic loads, as detailed in Appendix~\ref{apndx:power_model2}. All parameters for the motor load and synchronous generators, including the excitation system, can be found in~\cite{Sauer2017}, while a comprehensive description of the PV model along with its parameters is provided in~\cite{Roy2023}. The modified WSCC power system $\mr{case}$-$\mr{9PV}$ is represented in Fig.~\ref{fig:case9PV}. The synchronous speed is set to $\omega_{0} = 120\pi \mr{rad/sec}$, while the power base is chosen as $100 \mr{MVA}$. Similar to the first model, the steady-state values before any load perturbation are solved using power flow studies with the $\mr{runpf}$ function.
 
 For this power network, we induce a renewable load perturbation by changing the irradiance on a PV plant. The two PV plants $\mr{S}1$ and $\mr{S}2$ operate under a solar irradiance $I_r^0 = 1000 \mr{~W} / \mr{m}^2$. The perturbations induce a change in the irradiance value for a PV plant as $I_r^e=\left(1-\Delta_I\right)\left(I_r^0\right)+q_I$, where $ I_r^e$ denotes the irradiance after the disturbance, $\Delta_I$ is the severity of the disturbance, {$q_I$ is a Gaussian noise with zero mean and variance of $0.01 \Delta_I$.}
 
\begin{figure}[t]
	\centering
	\subfloat{\includegraphics[keepaspectratio=true,scale=0.38]{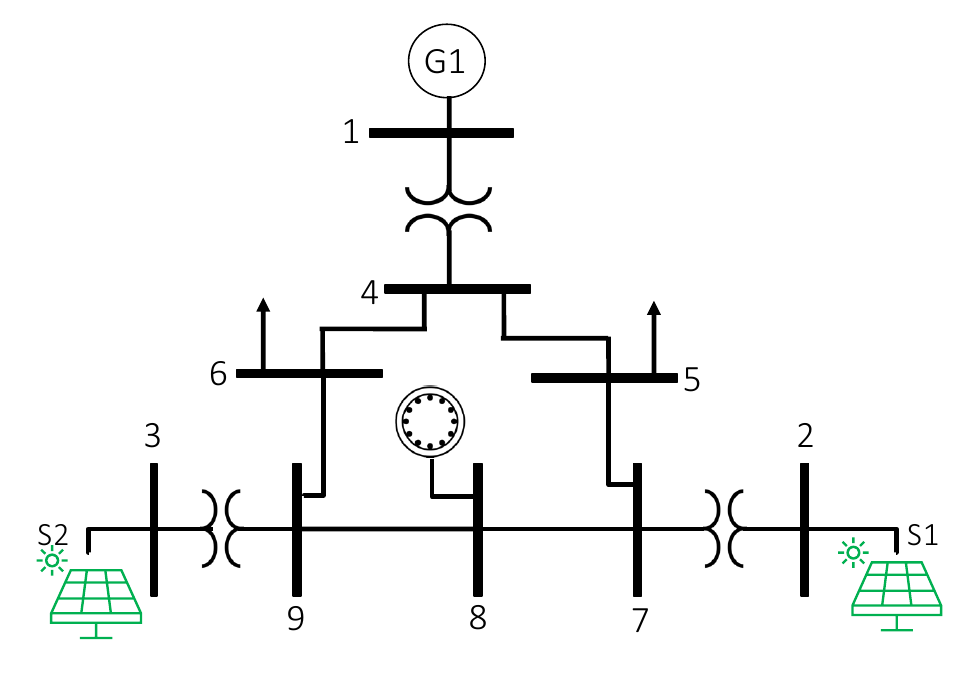}}
	\vspace{-0.4cm}
	\caption{One-line diagram of the modified WSCC power system~\cite{Roy2023} ($\mr{case}$-$\mr{9PV}$). It includes a motor load at Bus 8, a synchronous generator at Bus 1, and two PV plants $\mr{S} 1$ and $\mr{S} 2$ at Buses 2 and 3.}\label{fig:case9PV}
\end{figure}
\setlength{\textfloatsep}{0pt}

Based on the validity of uncertain renewable load allocation problem when applied to the $4^{\text{th}}$-order system, we now solve the allocation problem for the comprehensive power grid detailed in Appendix~\ref{apndx:power_model2}. We utilize the same notions of stability for PV plants, by considering voltage stability. That is, we analyze the voltages $v$ $(pu)$ across the AC capacitor along the dq-axis. While we could consider stability with respect to the inverter angles ${\delta}$ $(pu)$ of the PV plant, for consistency with Proposition~\ref{prs:stability-ident}, we consider voltage stability. The parameterized tensor matrix presented in Proposition~\ref{prs:param_Lyapexp} allows for the {depiction of perturbations along the trajectory of any selected state within the dynamical system.} Meaning that for more complex systems with different types of buses, one can choose the states to be considered when computing overall system stability.
 \begin{figure}[t]
 	\centering
 	\includegraphics[keepaspectratio=true,scale=0.5]{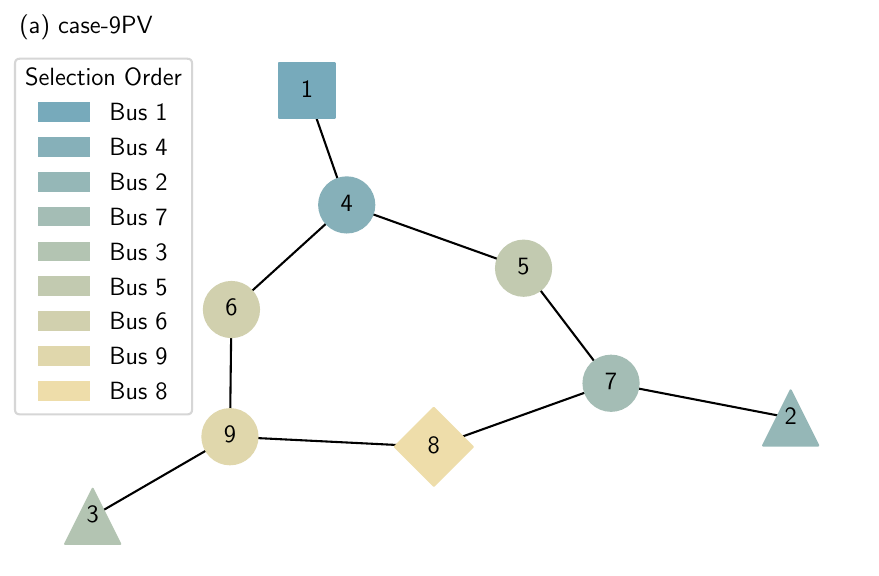}
 	\caption{The ordered nodes from stable to least stable for the $\mr{case}$-$\mr{9PV}$ network. The triangular nodes represent PV plants, while the diamond represents the motor load.}\label{fig:AllocCase9PV}
 \end{figure}
 \setlength{\textfloatsep}{0pt}
 That being said, the ranked nodes for $\mr{case}$-$\mr{9PV}$ are depicted in Fig.~\ref{fig:AllocCase9PV}. {The ranking shows that the most stable generator node is Bus No. $1$, and the most stable load is Bus No. $4$, which is connected to Bus No.  $1$.} The least stable node is Bus No.$8$, which is a motor load bus. This is expected since motors are sensitive to variations in voltages due to lack of voltage control. PV plants $\mr{S}1$ and $\mr{S}2$ connected to buses No. $2$ and $3$, exhibit good overall stability as compared to load buses within the network. This is due to the inverter control that regulates voltage and power, which allow for a reduced uncertainty propagation within the network. Similar to the previous power grid, we assess the validity of the ranked system buses by analyzing the impact of applying an uncertain RER perturbation at both a random bus and the most stable bus; see Fig.~\ref{fig:response2}.
  \begin{figure}[b]
 	\centering
	\vspace{-0.2cm}
 	\hspace{-0.3cm}
 	\subfloat{\includegraphics[keepaspectratio=true,scale=0.39]{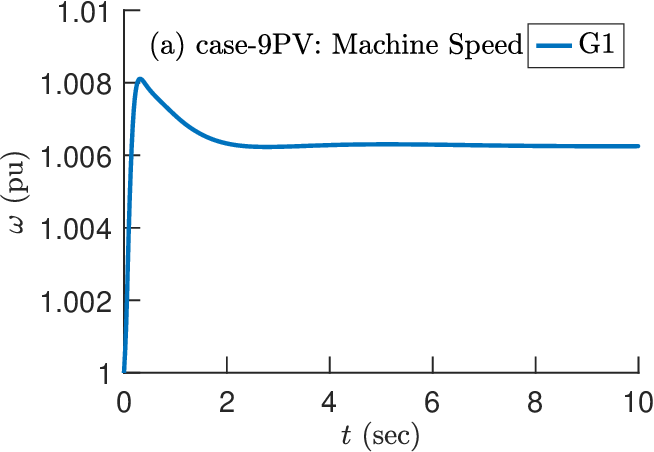}}\vspace*{-0.2cm} \hspace{-0.1cm}
 	\subfloat{\includegraphics[keepaspectratio=true,scale=0.39]{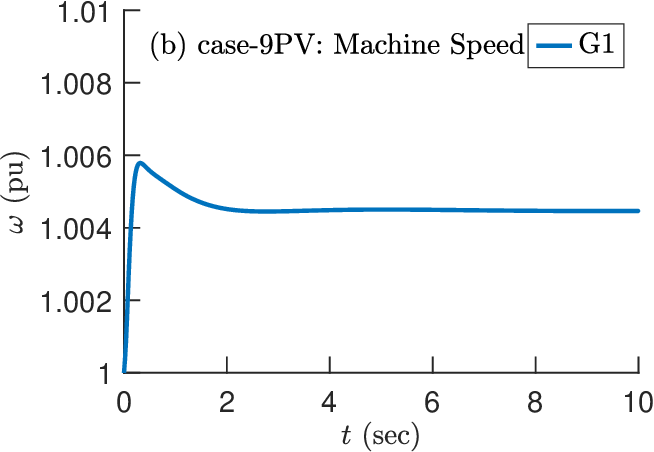}}{}{}
 	\subfloat{\includegraphics[keepaspectratio=true,scale=0.39]{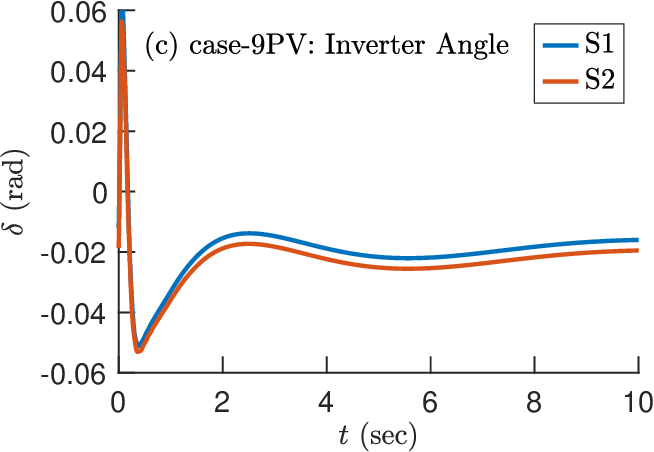}}\vspace*{-0.1cm} \hspace{-0.1cm}
	\subfloat{\includegraphics[keepaspectratio=true,scale=0.39]{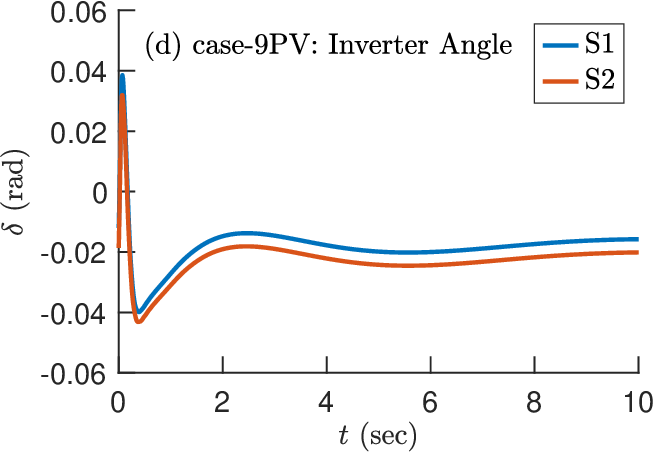}}{}{}
 	\caption{Frequency transients of the generator and inverter angles of the PV plants induced by allocating an uncertain renewable load injection at a random bus (a, c) and the most stable bus (b, d).}\label{fig:response2}
 \end{figure}
 The frequency response of Bus No. 1 and the inverter angle transients are depicted in Fig.~\ref{fig:response2}. Notice that the transients exhibit lower amplitude and shorter clearing time when applying a load perturbation to the stable node as compared to a random network node. This verifies the optimality of $\mathcal{S}^{*}$ obtained by solving $\mathbf{P1}$ for the $\mr{case}$-$\mr{9PV}$ power network.
 
We now attempt to answer $(\mathrm{Q}5)$ by considering the optimal nodes obtained from solving $\mathbf{P1}$ for the $\mr{case}$-$\mr{9PV}$ power network with PV plants. The modified network has a motor load at Bus No.$8$, a synchronous generator at Bus No. $1$, and two PV plants $\mr{S} 1$ and $\mr{S} 2$ at buses No. $2$ and $3$. Based on computing the overall stability implications resulting from allocating a perturbed RER load, the most stable location is found to be Bus No. 1 (see Fig.~\ref{fig:AllocCase9PV}), while the least stable is Bus No. 8. This result informs the system operator that allocating an RER at a motor load bus can result in the largest uncertainty propagation within the network from an overall stability perspective. In contrast, Bus No. 1 results in attenuated transients when allocated with the same RER under the same perturbed loads. The system operator can therefore observe which types of buses, along with their locations in a given network, can handle the uncertain RER load injections from allocating such resources within the grid. Observing the result and based on quantifying frequency, rotor angle, and voltage stability, we observe that the generator bus best attenuates the uncertainty from RER loads, while motor load buses lack this ability. The PV plants, as seen in Fig.~\ref{fig:AllocCase9PV}, can also handle the addition of uncertain loads due to their ability to regulate voltage and current using a proportional-integral (PI) type controller. We emphasize here that system operators can choose to parameterize the computation of LEs, presented in Proposition~\ref{prs:stability-ident}, by considering a single stability aspect, such as voltage stability. Thus, instead of focusing on the effect of allocating an RER on overall system stability, the operator can choose to allocate an RER based on its implications for voltage stability. This can be explored in future work, which may include developing scenarios for RER load dispatch based on different system stability aspects. Furthermore, the parameterization of state-deformation matrix (Proposition~\ref{prs:param_Lyapexp}) along any system state allows to consider aspects regarding the stability of RER resources, such as PV plants, by considering resonance stability and inverter angle stability. On such note, we end this section.
\section{Conclusions, limitations and future work}\label{sec:summary}
This paper presents a framework for quantifying overall power grid stability after a perturbed renewable load is allocated to a bus within the grid. The method is based on computing the spectrum of LEs of a system; it is based on stability criteria that consider frequency, voltage and rotor angle stability. {The proposed method allows for quantifying renewable uncertainty propagation from a dynamical systems perspective across the entire network.} As such, it informs the operator/utility where to practically allocate renewables while maintaining the overall stability of the power grid. This paper is not devoid of limitations. From a theoretical perspective, we focus on quantifying network stability based on frequency, rotor angle, and voltage. However, for RER-integrated networks, other stability measures can be considered. From a practical perspective, we consider two networks, the second is applied to a 9 bus modified network with two PV plants. Future work will focus on incorporating dynamic models of RERs, such as wind farms, and expanding the analysis to larger power systems. That being said, this is the first attempt at quantifying uncertainty propagation and its overall stability implications for model-based NL-DAE power systems. This enables modeling the uncertainty from renewable loads and therefore capturing the full uncertainty propagation within the nodes of the power system.







\bibliographystyle{ieeetr}%
\bibliography{library}

%
%
%
\appendices
\section{$4^{\text{th}}$-order power system model}\label{apndx:power_model} 
This system depicts the standard two axes $4^{\text{th}}$-order transient model of a synchronous generator~\cite[Ch. 7]{Sauer2017}. The considered model excludes exciter dynamics and turbine governor, meaning that each of the machines has four states and two control inputs. We note that the controller (i.e., governor response) adjusts the active power produced by a generator to stabilize the system after a disturbance from renewables. For a synchronous generator $i \in \mc{G}$; its $4^{\text{th}}$-order differential dynamics can be written as
\begin{subequations}\label{eq:differential_dynamics}
	\begin{align}
		&\dot{\delta_{i}} =  \omega_{i} - \omega_{0},\label{eq:differential_dynamics_1}\\
		&M_i\dot{\omega}_{i} = T_{\mr{N}i} - P_{\mr{G}i} - D_{i}(\omega_{i}-\omega_{0}),\label{eq:differential_dynamics_2}\\
		&T^{'}_{\mr{d0}i} \dot{E^{'}_{i}} = -\frac{x_{\mr{d}i}}{x^{'}_{\mr{d}i}}E^{'}_{i} + \frac{x_{\mr{d}i}- {x_{\mr{d}i}^{'}}}{x^{'}_{\mr{d}i}}v_{i}\cos(\delta_{i}- \theta_{i}) + E_{\mr{fd}i},\label{eq:differential_dynamics_3}\\
		&T_{\mr{CH}i}\dot{T}_{\mr{N}i} = T_{\mr{N}i} - \frac{1}{R_{\mr{D}i}}(\omega_{i}-\omega_{0}) + T_{\mr{r}i},\label{eq:differential_dynamics_4}
	\end{align}
\end{subequations}
where the time varying components in~\eqref{eq:differential_dynamics} are: $\delta_{i}$ the rotor angle $\mr{(rad)}$, $\omega_{i}$ generator rotor speed $\mr{(rad/sec)}$, ${E}^{'}_{i}$ generator transient voltage $\mr{(pu)}$, ${T}_{\mr{N}i}$ generator mechanical torque $(\mr{pu})$. Generator inputs are: $E_{\mr{fd}i}$ generator internal field voltage $(\mr{pu})$, $T_{\mr{r}i} $ governor reference signal $(\mr{pu})$. 
Constants in~\eqref{eq:differential_dynamics} are: $M_{i}$ is the rotor inertia constant $(\mr{pu} \times \mr{sec}^2)$, $D_{i}$ is the damping coefficient $(\mr{pu} \times \mr{sec}^2)$, $x_{\mr{d}i}$ and $x_{\mr{q}i}$ are the direct-axis synchronous reactance $(\mr{pu})$, $x^{'}_{\mr{d}i}$ is the direct-axis transient reactance $(\mr{pu})$, $T^{'}_{\mr{d0}i}$ is the direct-axis open-circuit time constant $\mr{(sec)}$, $T_{\mr{CH}i}$ is the chest valve time constant $\mr{(sec)}$, $R_{\mr{D}i}$ is the speed governor regulation constant $\mr{(Hz/pu)}$, and $\omega_0$ is the synchronous speed $(120\pi \;  \mr{rad/sec})$. 

The algebraic constraints of the power system represent the relation between the internal states of a synchronous generator, and it's generated power $P_{\mr{G}i}$ and $Q_{\mr{G}i}$, i.e., real and reactive power. The algebraic constraints of the NL-DAE system can be written as~\eqref{eq:algebriac_constraints} with $i \in \mc{G}$
\begin{subequations}\label{eq:algebriac_constraints}
	\begin{align}
				\hspace{-0.3cm} P_{\mr{G}i} =& \tfrac{1}{x^{'}_{\mr{d}i}}E^{'}_{i}v_{i}\sin(\delta_{i}-\theta_{i})  - \tfrac{x_{\mr{q}i}-x^{'}_{\mr{d}i}}{2x^{'}_{\mr{d}i}x_{\mr{q}i}}v^{2}_{i}\sin(2(\delta_{i}-\theta_{i}))\label{eq:algebriac_constraints_1}\\
				\begin{split}
							\hspace{-0.3cm} Q_{\mr{G}i} =& \tfrac{1}{x^{'}_{\mr{d}i}}E^{'}_{i}v_{i}\cos(\delta_{i}-\theta_{i}) - \tfrac{x_{\mr{q}i}-x^{'}_{\mr{d}i}}{2x^{'}_{\mr{d}i}x_{\mr{q}i}}v^{2}_{i}\\
							&-\tfrac{x_{\mr{q}i}-x^{'}_{\mr{d}i}}{2x^{'}_{\mr{d}i}x_{\mr{q}i}}v^{2}_{i}\cos(2(\delta_{i}-\theta_{i})).
						\end{split}\label{eq:algebriac_constraints_2}
			\end{align}	
\end{subequations}
where $\theta_{ij} = \theta_{i} - \theta_{j}$ is the bus angle, $v_i$ is the bus voltage $\mr{(pu)}$. 

The power balance between the set of generator and load buses with $i \in \mc{G} \cup \mc{L}$ can be written as~\eqref{eq:power_balance}. The power balance in~\eqref{eq:power_balance} resembles the power transfer between RER, generators and loads as follows 
\begin{subequations}\label{eq:power_balance}
	\begin{align}
		P_{\mr{G}i} + P_{\mr{L}i} +  P_{\mr{R}i} &= 
		\sum_{j=1}^{N} v_{i}v_{j}(G_{ij}\cos\theta_{ij}+B_{ij}\sin\theta_{ij}),\\
		Q_{\mr{G}i} +Q_{\mr{L}i} +Q_{\mr{R}i} &= 
		\sum_{j=1}^{N} v_{i}v_{j}(G_{ij}\cos\theta_{ij}-B_{ij}\sin\theta_{ij}),\vspace*{-0.2cm}
	\end{align} 
\end{subequations}
where matrices $G_{ij}$ and $B_{ij}$ denote, respectively, the conductance and susceptance between bus $i$ and $j$. It is noteworthy to mentioned that the load injections that model RER are included in the power balance equations~\eqref{eq:power_balance} as real and reactive power $P_{\mr{R}i}$ and $Q_{\mr{R}i}$. Accordingly, the differential state, algebraic state and input vectors are summarized as follows

\vspace{-0.2cm}
\begin{subequations}
	\small
	\begin{align}
			\m{x}_{d} &=  \left\{ \{\delta_{i}\}_{i=0}^{G}, \; \{\omega_{i}\}_{i=0}^{G},\; 
			 \{E^{'}_{i}\}_{i=0}^{G}, \;  \{{T}_{\mr{N}i} \}_{i=0}^{G} \right\}^{\top} \in \Rn{n_d}, \label{diffstate}\\
			\m{x}_{a} &= \left\{\{{P}_{\mr{G}i} \}_{i=0}^{G},\; \{{Q}_{\mr{G}i} \}_{i=0}^{G},\; \{{v}_{i} \}_{i=0}^{N},\; \{{\theta}_{i} \}_{i=0}^{N}\right\}^{\top}\in \Rn{n_a},\label{algstate}\\
			\m{u} &= \left[ \{{E}_{\mr{fd}i} \}_{i=0}^{G}, \; \{{T}_{\mr{r}i} \}_{i=0}^{G} \right]^{\top} \in \Rn{n_u}.\label{inputstate} 
		\end{align}
\end{subequations}

\section{PV and load-integrated power system model}\label{apndx:power_model2} 
The comprehensive $9^{\text{th}}$-order transient model of synchronous generator $i \in \mc{G}$, that includes the governor, hydro/steam turbine and IEEE type DC1 excitation system dynamics can be given as follows.

\hspace{-0.21cm}\textbullet\hspace{+0.21cm} The swing equations for $i \in \mc{G}$ can be written as follows
\begin{subequations}\label{eq:powermodel2-1}
\begin{align}
	& \dot{\delta_{i}}=\omega_i-\omega_0, \\
	& \dot{\omega}_{i}=\frac{1}{2 H_i}\left(T_{\mathrm{M}_i}-T_{\mathrm{e}_{\mathrm{i}}}\right), \quad T_{\mathrm{e}_i}=E_{\mathrm{d}_i} i_{\mathrm{d}_i}+E_{\mathrm{q}_i} i_{\mathrm{q}_i}, \\
	& \dot{E}_{\mathrm{q}_i}=-\frac{1}{t_{\mathrm{qo}_i}}\left(E_{\mathrm{q}_i}-\left(x_{\mathrm{q}_i}^{\prime}-x_{\mathrm{q}_i}\right) i_{\mathrm{d}_i}\right), \\
	& \dot{E}_{\mathrm{d}_i}=-\frac{1}{t_{\mathrm{do}_i}}\left(E_{\mathrm{d}_i}+\left(x_{\mathrm{d}_i}^{\prime}-x_{\mathrm{d}_i}\right) i_{\mathrm{q}_i}-E_{\mathrm{fd}_i}\right),
\end{align}
\end{subequations}
where the time-varying components in ~\eqref{eq:powermodel2-1} are: $\delta_i$ is the generator rotor angle $(\mr{pu}), \omega_i$ is the generator rotor speed $(\mr{pu})$, $E_{\mr{q}_i}$ and $E_{\mr{d}_i}$ are the dq-axis voltages $(\mr{pu})$ for transient reactance. Constants in~\eqref{eq:powermodel2-1} are: $\omega_0$ which denotes the synchronous speed $(120\pi \;  \mr{rad/sec})$, $x_{\mathrm{q}_i}^{\prime}, x_{\mathrm{d}_i}^{\prime}, x_{\mathrm{q}_i}, x_{\mathrm{d}_i}$ are the synchronous transient reactance $(\mr{pu})$ along dq-axis, $i_{\mr{d}_i}, i_{\mr{q}_i}$ are the dq-axis generator currents, $t_{\mr{qo}_i}, t_{\mr{do}_i}$ are the open circuit time constants $(\mr{sec})$ along dq-axis, $H_i$ is the generator inertia constant $(\mr{pu} \times \mr{sec})$, and $T_{\mr{e}_i}$ is the electrical air gap torque $(\mr{pu})$.

\hspace{-0.21cm}\textbullet \hspace{+0.21cm}The turbine and governor dynamics can be written as follows
\begin{subequations}\label{eq:powermodel2-2}
\begin{align}
	&\hspace{-0.27cm} \dot{T}_{\mathrm{M}_i}= \begin{cases}-\frac{1}{t_{\mr{ch}i }}\left(T_{\mathrm{M}_i}-P_{v_i}\right), & \text { for thermal, } \\
		-\frac{2}{t_{w i}}\left(T_{\mathrm{M}_i}-P_{v_i}+t_{\mathrm{ch} i} \dot{P}_{v_i}\right), & \text { for hydro, }\end{cases} \\
	& \hspace{-0.27cm}\dot{P}_{v_i}=-\frac{1}{t_{v i}}\left(P_{v_i}-P_{v_i}^*+\frac{\omega_i-1}{R_{d i}}\right),
\end{align}
\end{subequations}
where the time-varying components in ~\eqref{eq:powermodel2-2} are: $T_{\mr{M}_i}$ which represents the mechanical torque input $(\mr{pu})$ and $P_{{v}_{i}}$ which denotes the steam/hydro valve position $(\mr{pu})$.
	The constants in~\eqref{eq:powermodel2-2} are:  $R_{\mr{d}i}$ which denotes the governor droop constant $(\mr{Hz} / \mr{pu})$, and $ P_{v_i}^*$ which denotes the valve position set point $(\mr{pu})$ from the grid operator.
	
\hspace{-0.21cm}\textbullet \hspace{+0.21cm}The excitation system dynamics can be written as follows
\begin{subequations}\label{eq:powermodel2-3}
	\begin{align}
	&\hspace{-0.25cm}\dot{E}_{\mr{fd}_i} =\tfrac{-1}{t_{\mathrm{fd}i}}\left(k_{e i}+S_{e i} E_{\mathrm{fd}_i}-v_{a i}\right), \quad S_{e i}=a_i e^{b_i E_{\mathrm{fdi}}}, \\
&\hspace{-0.25cm}	\dot{r}_{f_i} =-\tfrac{1}{t_{f i}}\left(r_{f_i}-\frac{k_{f i}}{t_{f i}} E_{\mathrm{fd}_i}\right), \\
&\hspace{-0.25cm}	\dot{v}_{a i} =-\tfrac{1}{t_{a i}}\left(v_{a i}-k_{a i} v_{e i}\right),
\end{align}
\end{subequations}
where the time-varying components in ~\eqref{eq:powermodel2-3} are: $E_{\mathrm{fd}}$ which denotes the generator field voltage $(\mr{pu})$, $v_{ai}$ denotes the amplifier voltage $(\mr{pu})$, and $ r_{\mr{f}_i}$ is the stabilizer output $(\mr{pu})$. In~\eqref{eq:powermodel2-3}, $t_{\mr{chi}}, t_{{w}i}, t_{{v}i}, t_{\mr{fd}i}, t_{f i}$, and $t_{ai}$ are the time constants $(\mr{sec})$ for steam/hydro valve position, field voltage, stabilizer, and amplifier. Such that, $S_{e i}$ represents the generator field voltage saturation function with constant scalars $a_i, b_i$. Furthermore,$k_{e i}, k_{f i}, k_{a i}$ are the exciter, stabilizer, and amplifier gains. The voltage set point from the grid operator is denoted as $V_i^*$, where $V_i$ is the generator terminal voltage. Accordingly, the differential state, algebraic state and input vectors are summarized as follows

\vspace{-0.3cm}
\begin{subequations}
	\small
	\begin{align}
		\m{x}_{d} &= \left\{\begin{array}{lll}
			\m{x}_G^{\top} & \m{x}_R^{\top} & \m{x}_m^{\top}
		\end{array} \right\}^{\top} \in \mathbb{R}^{n_d}, \label{diffstate2}\\
		\m{x}_{a} &= \left\{\{I_{\mr{Re}_i} \}_{i=0}^{N},\; \{I_{\mr{Im}_i} \}_{i=0}^{N},\; \{V_{\mr{Re}_i} \}_{i=0}^{N},\; \{V_{\mr{Im}_i} \}_{i=0}^{N}\right\}^{\top}\in \Rn{n_a},\label{algstate2}\\
		\m{u} &= \left[ \{V_i^* \}_{i=0}^{G}, \;\{P_v^* \}_{i=0}^{G},\;\{V_s^* \}_{i=0}^{G}, \;\{P_s^* \}_{i=0}^{G}, \right]^{\top}  \in \Rn{n_u}\vspace{-0.2cm}\label{inputstate2}
	\end{align}
	\end{subequations}
where $\m{x}_G$ are the dynamic states of the $9^{\text{th}}$-order system (synchronous generator, excitation system, governor, and turbine dynamics), $\m{x}_R$ denotes the dynamic states of the PV plant, and $\m{x}_m$ denotes the states of motor loads. Vectors  $\left\{\{I_{\mr{Re}_i} \}_{i=0}^{N},\; \{I_{\mr{Im}_i} \}_{i=0}^{N},\; \{V_{\mr{Re}_i} \}_{i=0}^{N},\; \{V_{\mr{Im}_i} \}_{i=0}^{N}\right\}$ are the real and imaginary parts of current and voltages. Furthermore, ${P}_s^*$ and ${V}_s^*$ are the power (pu) and voltage (pu) reference set points for solar PV plants. For brevity, we refer the reader to~\cite{Roy2023} for the detailed dynamic equations describing all the states of the solar power plants and the dynamical model for motor loads.

\section{Newton-Raphson algorithm}\label{apndx:NR-method}
The NR iterative method ensures state convergence for each time-step $k$ by evaluating the Jacobian of the nonlinear dynamics for each $k$ under a NR iteration $i$. The Jacobian is used to compute increment $\Delta \m{x}_{k}^{(i)}$. The computed increment is then used to update the state vector to the next time-step as $\m{x}^{(i+1)}_{k}= \m{x}_{k}^{(i)} + \Delta \m{x}_{k}^{(i)}$ for each NR iteration $i$. Once a convergence criterion is satisfied, $||\Delta \m{x}_{k}^{(i)}||_2 \leq \eps$, time-step $k$ advances to $k+1$ and iterates for the whole time-span $t$. Here $\eps$ is a small and positive convergence criterion. With that in mind, the iteration NR increment $\Delta \m{x}_{k}^{(i)}$ can be written as
\begin{equation}\label{eq:mu-newton_raph}
	\Delta \m{x}^{(i)}_{k} = \left[\mA_{g}(\m{z}^{(i)}_{k})\right]^{-1}\begin{bmatrix}
		\m{\varphi}(\m{z}^{(i)}_{k})
	\end{bmatrix},
\end{equation}
where $\m{\varphi}(\m{z}^{(i)}_{k},\m{z}_{k-1}):=\m{\varphi}(\m{z}^{(i)}_{k})$ is the discrete-time nonlinear model~\eqref{eq:semi_NDAE_rep-ODE} in implicit form such that $\m{\varphi}(\m{z}^{(i)}_{k}) = 0 \;\forall \; k$. The Jacobian $\m{A}_{g}(\m{z}^{(i)}_{k}) := \begin{bmatrix}
	\tfrac{\partial \m{\varphi}(\m{z}^{(i)}_{k})}{\partial \m{x}_k}
\end{bmatrix} \in \Rn{n\times n}$ is defined as 
\begin{equation}\label{eq:mu-Jac_Newton_Raph}
\small	\m{A}_{g}\hspace{-0.05cm}=\hspace{-0.1cm}
	\begin{bmatrix}
		\eye_{n_d}-\tilde{h} 
		\m{F}_{\m{x}_{d}}(\m{z}^{(i)}_{k},\m{z}_{k-1})
		\hspace{-0.1cm}& \hspace{-0.2cm}-\tilde{h} 
		\m{F}_{\m{x}_{a}}(\m{z}^{(i)}_{k},\m{z}_{k-1})\\
		-\tilde{h} 
		\m{G}_{\m{x}_{d}}(\m{z}^{(i)}_{k},\m{z}_{k-1})\hspace{-0.4cm}& \hspace{-0.1cm}
		\eye_{n_a}-
		\tilde{h} \m{G}_{\m{x}_{a}}(\m{z}^{(i)}_{k},\m{z}_{k-1})
	\end{bmatrix},
\end{equation}
where matrices $\m{F}_{\m{x}_{d}}(\m{z}^{(i)}_{k},\m{z}_{k-1}):= \m{F}_{\m{x}_{d}}(\m{z}^{(i)}_{k})+\m{F}_{\m{x}_{d}}(\m{z}_{k-1}) \in \Rn{n_{d}\times n_{d}}$ and $\m{F}_{\m{x}_{a}}(\m{z}^{(i)}_{k},\m{z}_{k-1}):= \m{F}_{\m{x}_{a}}(\m{z}^{(i)}_{k})+\m{F}_{\m{x}_{a}}(\m{z}_{k-1})$ $\in \Rn{n_{d}\times n_{a}}$ are the Jacobians of~\eqref{eq:disc_ssm_NDAE} with respect to $\m{x}_{d}$ and $\m{x}_{a}$. Matrix $\eye_{n_d}$ is an identity matrix of dimension similar to $\m{F}_{\m{x}_{d}}(\cdot)$. Matrices $\m{G}_{\m{x}_{d}}(\m{z}^{(i)}_{k},\m{z}_{k-1}):=\m{G}_{\m{x}_{d}}(\m{z}^{(i)}_{k})+\m{G}_{\m{x}_{d}}(\m{z}_{k-1}) \in \Rn{n_{a}\times n_{d}}$ and $\m{G}_{\m{x}_{a}}(\m{z}^{(i)}_{k},\m{z}_{k-1}):=\m{G}_{\m{x}_{a}}(\m{z}^{(i)}_{k}) +\m{G}_{\m{x}_{a}}(\m{z}_{k-1}) \in \Rn{n_{a}\times n_{a}}$ are the Jacobians with respect to $\m{x}_{d}$ and $\m{x}_{a}$ and $\eye_{n_a}$ is an identity matrix. The NR iterative method is summarized in Algorithm~\ref{algorithm:NR}. 
\section{Submodular set function maximization}\label{apndx:greedy}
{For any set function, denoted by $\mc{L}(\cdot)$, that is submodular and monotone increasing,} the maximization of the set function computed using the greedy algorithm offers a theoretical worst-case bound according to the following theorem.
\begin{theorem}(\hspace{-0.012cm}\cite{Nemhauser1978}) Let $\mc{L}: 2^V \rightarrow \mathbb{R}$ be a submodular and monotone increasing set function, $\mc{L}^*$ be the optimal solution of $\mr{\textbf{P1}}$ and $\mc{L}^*_{\mathcal{S}}$ be the solution computed using the greedy algorithm. Then, the following performance bound holds true 
	\begin{align*}
		\mc{L}^*_{\mathcal{S}} -\mc{L}(\emptyset) \geq \left(1-\frac{1}{e}\right)\left(\mc{L}^*-\mc{L}(\emptyset)\right), \quad \text{with}\;\;\; \mc{L}(\emptyset) =0,
	\end{align*}
	where $e\approx 2.71828$. 
\end{theorem}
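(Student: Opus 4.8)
The plan is to reproduce the classical greedy analysis of Nemhauser, Wolsey and Fisher, specialized to the cardinality-constrained problem $\mathbf{P1}$ in which $|\mc{S}| = s$. First I would normalize the objective: replacing $\mc{L}$ by $\mc{A}\mapsto\mc{L}(\mc{A}) - \mc{L}(\emptyset)$ preserves both submodularity and monotone increase while making the empty-set value zero, so without loss of generality I may assume $\mc{L}(\emptyset) = 0$ (which is in fact the case stated). Denote by $\mc{S}_0 = \emptyset \subseteq \mc{S}_1 \subseteq \cdots \subseteq \mc{S}_s$ the nested sets produced by the greedy step of Algorithm~\ref{alg:algorithm_allocation}, where $\mc{S}_i = \mc{S}_{i-1} \cup \{a_i\}$ with $a_i \in \mathrm{arg\,max}_{a \in \mc{V}\setminus \mc{S}_{i-1}} (\mc{L}(\mc{S}_{i-1}\cup\{a\}) - \mc{L}(\mc{S}_{i-1}))$, and let $\mc{O} \subseteq \mc{V}$ be an optimal feasible set, so that $\mc{L}^* = \mc{L}(\mc{O})$, $|\mc{O}| = s$, and $\mc{L}^*_{\mc{S}} = \mc{L}(\mc{S}_s)$.

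The heart of the argument is the one-step estimate $\mc{L}^* \le \mc{L}(\mc{S}_i) + s\,(\mc{L}(\mc{S}_{i+1}) - \mc{L}(\mc{S}_i))$, valid for every $i \in \{0,\ldots,s-1\}$. To prove it I would: (i) invoke monotone increase to get $\mc{L}^* = \mc{L}(\mc{O}) \le \mc{L}(\mc{O} \cup \mc{S}_i)$; (ii) enumerate $\mc{O}\setminus\mc{S}_i = \{b_1,\ldots,b_r\}$ with $r \le s$ and telescope $\mc{L}(\mc{O}\cup\mc{S}_i) - \mc{L}(\mc{S}_i) = \sum_{j=1}^r (\mc{L}(\mc{S}_i \cup \{b_1,\ldots,b_j\}) - \mc{L}(\mc{S}_i \cup \{b_1,\ldots,b_{j-1}\}))$; (iii) dominate each summand by $\mc{L}(\mc{S}_i \cup \{b_j\}) - \mc{L}(\mc{S}_i)$ using submodularity (diminishing returns of $b_j$ as the base set grows from $\mc{S}_i$ to $\mc{S}_i\cup\{b_1,\ldots,b_{j-1}\}$); and (iv) dominate that marginal by $\mc{L}(\mc{S}_{i+1}) - \mc{L}(\mc{S}_i)$, since $a_{i+1}$ was the greedy maximizer. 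Summing the at most $s$ terms yields the estimate.

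It then remains to close the recursion. Writing $\delta_i := \mc{L}^* - \mc{L}(\mc{S}_i)$, the one-step estimate rearranges to $\delta_{i+1} \le (1 - 1/s)\,\delta_i$, whence by induction $\delta_s \le (1 - 1/s)^s \delta_0 = (1 - 1/s)^s \mc{L}^*$, using $\delta_0 = \mc{L}^* - \mc{L}(\emptyset) = \mc{L}^*$. Since $(1 - 1/s)^s \le e^{-1}$ for every integer $s \ge 1$, this gives $\mc{L}^* - \mc{L}^*_{\mc{S}} = \delta_s \le e^{-1}\mc{L}^*$, i.e.\ $\mc{L}^*_{\mc{S}} \ge (1 - 1/e)\,\mc{L}^*$; undoing the normalization restores the stated bound with a general $\mc{L}(\emptyset)$ on both sides. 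The one place that needs genuine care — and the main obstacle — is step (iii): the chain-telescoping bound uses the full definitional strength of submodularity, which for the present objective $\mc{L}(\cdot) = \mr{log\,det}(\tilde{\m{\Xi}}(\cdot,\m\gamma))$ is exactly what Proposition~\ref{prs:Trace-Logdet_prop} supplies, while the monotonicity established there is what justifies step (i); the remaining steps are routine bookkeeping.
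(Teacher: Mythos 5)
Your proof is correct and is precisely the classical Nemhauser--Wolsey--Fisher argument (normalization, the one-step estimate via monotonicity, chain telescoping, submodular diminishing returns, and the greedy choice, followed by the $(1-1/s)^s \le e^{-1}$ recursion); the paper itself supplies no proof and simply imports the result by citing \cite{Nemhauser1978}, whose proof is exactly the one you reconstruct. No gaps.
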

We note that the above bound is theoretical, and generally a greedy approach performs better in practice. In practice when considering a submodular set maximization problem,  it has been shown the performance accuracy achieves a $99\%$ guarantee; see~\cite{Summers2016} and the many references that cite this work. 

\begin{algorithm}[t]
	\caption{Newton-Raphson Algorithm}\label{algorithm:NR} 
	\DontPrintSemicolon
	\textbf{Input:} $\m{x}_0$, $\epsilon$, $\max$ iteration\;
	\textbf{Output:} $\m{x}^* = \m{x}_{k+1} \; \forall \; k \; \in \; \{1,2,\ldots,\mr{N}-1\}$\;
	\textbf{Initialize:} $i \leftarrow 0$, $\m{x}^{(i)} \leftarrow \m{x}_0$\;
	\ForAll{$k \; \in \; \{1,2,\ldots,\mr{N}-1\}$}{
		\While{$||\Delta \m{x}_{k}^{(i)}||_2 $ $> \epsilon$ \textbf{and} $i <\max$ iteration}{
			\textbf{compute:} Jacobian $\m{A}_g(\m{z}^{(i)})$ as in~\eqref{eq:mu-Jac_Newton_Raph}\;
			\textbf{solve for:} $\Delta \m{x}^{(i)}$ as in~\eqref{eq:mu-newton_raph}\;
			\textbf{update:} $\m{x}^{(i+1)} \leftarrow \m{x}^{(i)} + \Delta \m{x}^{(i)}$\;
			\textbf{update:} \textit{error} $\leftarrow \|\Delta \m{x}^{(i)}\|_2$\;
			\If{error $\leq \epsilon$}{
				\Return $\m{x}^{(i+1)} = x^{*}$\;
			}
			\Else{
				$i \leftarrow i + 1$\;\;
				\vspace{-0.4cm}
			}
			\vspace{-0.05cm}
		}
		\vspace{-0.05cm}
	}
\end{algorithm}
\vspace{-0.4cm}
\setlength{\textfloatsep}{0pt}

\end{document}